\newtheorem{theorem}{Theorem}[section]
\newcounter{example}[section]
\newtheorem{assumption}{Assumption}
\newtheorem{definition}{Definition}
\newtheorem{axiom}{Axiom}
\newtheorem{proposition}{Proposition}
\newtheorem{corollary}{Corollary}[theorem]
\newtheorem{lemma}[theorem]{Lemma}
\title{Relative Advantage: Quantifying Performance in Noisy Competitive Settings}
\author{
 M. R. Brown \\
  BioMedical Engineering\\
  Swansea University\\
  Swansea, SA18EN \\
  \texttt{m.r.brown@swansea.ac.uk} \\
   \And
 G. Scott \\
 Sports and Exercise Science\\
  Swansea University\\
  Swansea, SA18EN \\
  \texttt{2137371@swansea.ac.uk} \\
  \And
 L. Kilduff \\
 Sports and Exercise Science\\
  Swansea University\\
  Swansea, SA18EN \\
  \texttt{l.kilduff@swansea.ac.uk} \\
}
\begin{document}
\maketitle
\begin{abstract}
Performance measurement in competitive domains is frequently confounded by shared environmental factors that obscure true performance differences. For instance, absolute metrics can be heavily influenced by factors as varied as weather conditions in sports, prevailing economic climates in business evaluations, or the socioeconomic background of student populations in education. This paper develops a unified mathematical framework for relative performance metrics that systematically eliminates shared environmental effects through a principled transformation that will help improve interpretation of performance metrics.
We formalise the mechanism of environmental noise cancellation using signal-to-noise ratio analysis and establish theoretical bounds on metric performance. Through comprehensive simulations across diverse parameter configurations, we demonstrate that relative metrics consistently outperform absolute ones under specified conditions, with improvements up to 28\% in classification accuracy when environmental noise dominates individual variations.
As an example, we validate the mathematical framework using real-world rugby performance data, confirming that relativised metrics provide substantially better predictive power than their absolute counterparts. Our approach offers both theoretical insights into the conditions governing metric effectiveness and practical guidance for measurement system design across competitive domains from sports analytics to financial performance evaluation and healthcare outcomes research. 
\end{abstract}

\keywords{relative performance, key-performance indicators, noise cancellation, competition}

\section{Introduction}

The challenge of accurately measuring and comparing performance pervades numerous domains, from sports analytics to financial markets and healthcare outcomes. While absolute performance metrics provide valuable insights, they often fail to account for shared environmental factors that can substantially distort meaningful comparisons. This limitation becomes particularly acute in competitive settings, where the ability to isolate true performance differences from contextual noise is crucial for both theoretical understanding and practical decision-making.

\subsection{Background}

The concept of relative measurement, as opposed to absolute metrics, has deep roots across multiple domains of human inquiry. In economics, the notion of comparative advantage \cite{ricardo1817principles} established that relative capabilities, rather than absolute ones, determine optimal trade patterns. Similarly, in legal theory, the concept of proportionality \citep{barak2012proportionality} emphasises the relative balance of competing rights and interests rather than absolute principles. 

In competitive domains, relative frameworks have proven particularly insightful. Financial markets evaluate company performance not through absolute profits, but through relative metrics like market share, competitive position, and benchmark-adjusted returns \citep{fama1993common}. Educational assessment has increasingly incorporated value-added measures that account for contextual factors \citep{hanushek2010generalizations}. Even in public policy, relative poverty measures have largely replaced absolute thresholds, recognising that deprivation is inherently contextual \citep{foster2010report}. In competitive domains, relative frameworks have proven particularly insightful. Keiningham et al. \cite{keiningham2015competitive} demonstrated that relative "ranked" metrics consistently outperform absolute metrics in predicting consumer behaviour, with relative satisfaction showing up to three times stronger predictive power for share of wallet than absolute satisfaction scores. Their cross-cultural study spanning nine countries found that "it is not the absolute level of these metrics that is of primary importance; rather, these metrics need to be put into a competitive context so that they reflect a firm's or brand's relative position vis-à-vis competitive alternatives".

Sports analytics has evolved considerably over recent decades, progressing from basic descriptive statistics to increasingly sophisticated performance metrics. A fundamental principle in this field, as established by Hughes $\&$ Bartlett \cite{hughes2002performance}, is that "the absolute values of performance indicators may not tell us much unless they are compared in some way with other performers." This principle aligns with our axiom of Invariance to Shared Effects (See section \ref{thm:effect_constraints}) and underscores the necessity for contextually normalised metrics to enable meaningful comparisons across different competitive scenarios.

Despite this recognised need for relative measurement frameworks, the theoretical foundations for metric design and transformation in sports remain somewhat fragmented across different analytical approaches, especially when compared to other fields that have developed more cohesive frameworks. This fragmentation presents particular challenges in team sports, where, as Bornn et al. \cite{bornn2021spatiotemporal} observed, the high degree of interaction between competitors renders absolute metrics especially problematic. The development of a unified theoretical framework that systematically implements contextualisation principles represents an important frontier in advancing sports analytics methodology.

\subsection{Motivation}

Consider a marathon runner whose finishing time is affected by course elevation, weather conditions, and competitive dynamics. When comparing performances across different races, absolute times may provide misleading indicators of relative ability due to these varying environmental factors. Similar challenges arise in comparing investment fund returns during different market conditions, hospital outcomes with varying patient populations, or team performances across different competitive contexts.

The fundamental challenge is one of signal extraction: how can we separate the intrinsic performance differences (signal) from the shared environmental effects (noise) that influence all competitors similarly? Traditional approaches often rely on complex statistical adjustments, domain-specific normalisation techniques, or simply acknowledging limitations when interpreting absolute metrics. What remains lacking is a unified mathematical framework that formalises the advantages of direct relative comparison and quantifies when such approaches outperform absolute metrics. 

\subsection{Related Work}

The concept of relative performance measurement has emerged across multiple domains, though often without explicit mathematical formalisation. In financial markets, Fama and French \cite{fama1993common} and Carhart \cite{carhart1997persistence} demonstrated how benchmark-relative performance metrics provide more reliable indicators of investment skill than absolute returns. The Sharpe ratio \cite{sharpe1994sharpe} and subsequent risk-adjusted measures represent significant methodological advancements for normalising investment performance under diverse market conditions.

In healthcare, risk-adjusted comparative outcome measures have become standard for hospital performance evaluation \cite{iezzoni1997risk, normand2016statistical}. These approaches transform healthcare evaluation by enabling more sophisticated assessments of treatment effectiveness across diverse settings.

Relative metrics have become increasingly prevalent in sports analytics, leading to substantial methodological innovation. In this context, the seminal work of Dixon and Coles \cite{dixon1997modelling} established a framework for statistically modelling association football outcomes, effectively accounting for the influence of team strength.
Recent work by Bennett et al. and Scott et al. \cite{scott2023classifying,scott2023performance,bennett2019descriptive,bennett2021predicting} has extended these principles to rugby, identifying key performance indicators that distinguish winning from losing teams after controlling for contextual factors. Spatiotemporal methods by Bornn et al. \cite{bornn2021spatiotemporal} further demonstrate how the analysis of relative positioning and interaction can reveal tactical structures not visible through absolute metrics.

Despite these domain-specific advances, a comprehensive mathematical framework that formalises the mechanism of environmental noise cancellation through relativisation remains absent from the literature. While signal processing techniques for noise cancellation \cite{boll1979suppression} exist, these have not been systematically applied to performance comparison in competitive settings. 

\subsection{Contributions}

This paper makes four primary contributions to the field:

\begin{enumerate}
    \item We develop a unified mathematical framework for relative performance metrics, grounded in signal-to-noise ratio analysis, that formalises the mechanism of environmental noise cancellation.

    \item We establish fundamental theoretical bounds on metric performance and prove optimality conditions for binary classification in competitive domains.

    \item We demonstrate through comprehensive simulations the precise conditions under which relative metrics outperform absolute ones, quantifying the magnitude of improvement across the parameter space.

    \item We validate our framework using real-world rugby performance data, showing how relativisation substantially improves prediction of match outcomes.
\end{enumerate}

Our framework provides both theoretical insights and practical guidance for measurement system design across diverse competitive domains. By establishing when and why relative metrics outperform absolute ones, we enable more principled approaches to performance evaluation in noisy environments.

\subsection{Paper Organisation}

The remainder of this paper is organised as follows:

\begin{itemize}
    \item Section 2 develops the theoretical framework for relative performance, establishing the mathematical foundation for our approach and demonstrating how relativisation systematically eliminates shared environmental effects. Having established the theoretical mechanism of noise cancellation, we now turn to specific metrics for quantifying its benefits.

    \item Section 3 introduces complementary metrics for performance evaluation (separability, information content, and effect size), deriving theoretical bounds and proving the conditions under which relative metrics outperform absolute ones. With the theoretical advantages of relativisation established, we now design experiments to validate these predictions.

    \item Section 4 describes our experimental methodology, including simulation design, parameter configurations, and performance evaluation approach.

    \item Section 5 presents comprehensive results, demonstrating the performance of relative versus absolute metrics across diverse parameter regimes and validating our framework with rugby performance data.

    \item Section 6 discusses the implications of our findings, connects them to real-world applications, and addresses limitations of our approach.
 
    \item Section 7 concludes with a summary of contributions and directions for future research.
\end{itemize}

Through this systematic exploration, we aim to establish relativisation as a fundamental principle for performance assessment in competitive domains, providing both theoretical grounding and practical guidance for its application.

\newpage

\section{Theoretical Framework for Relative Performance}

We present a theoretical framework for relative performance metrics that formalises the mechanism of environmental noise cancellation. Our approach begins with an axiomatic foundation that establishes the essential properties of valid relative metrics, then develops a comprehensive mathematical analysis of their behaviour in competitive settings.

\subsection{Axiomatic Foundation}

We establish four fundamental axioms that any relative performance metric $R$ must satisfy:

\begin{axiom}[Invariance to Shared Effects]
For any shared environmental effect $\eta$, a valid relative metric $R$ must satisfy: 
$R(X_A + \eta, X_B + \eta) = R(X_A, X_B)$.
\end{axiom}

This axiom formalises the core principle that relative metrics should be invariant to environmental factors affecting all competitors equally. It ensures that the metric isolates intrinsic performance differences from contextual noise.

\begin{axiom}[Ordinal Consistency]
If $\mu_A > \mu_B$, then $\mathbb{E}[R(X_A, X_B)] > 0$.
\end{axiom}

The ordinal consistency axiom ensures that the relative metric preserves the true ordering of competitors. When one competitor has higher true performance, the expected value of the metric should reflect this relationship.

\begin{axiom}[Scaling Proportionality]
For any scalar $\alpha > 0$, $R(\alpha X_A, \alpha X_B) = \alpha R(X_A, X_B)$.
\end{axiom}

This axiom requires that the metric scales proportionally with the underlying measurements. This property ensures consistent interpretation across different measurement scales and units.

\begin{axiom}[Optimality]
Under regularity conditions including normality, independence of errors, and finite variances, $R(X_A, X_B) = X_A - X_B$ minimises the expected squared error in estimating the true performance difference $\mu_A - \mu_B$.
\end{axiom}
The optimality axiom establishes that the simple difference represents the most efficient estimator of true performance differences under standard statistical conditions.
\paragraph{Justification:} The expected squared error criterion is optimal in this context because:
\begin{itemize}
    \item It yields an unbiased estimator of the true performance difference
    \item Among all unbiased estimators, it achieves minimum variance (MMSE property)
    \item It maintains invariance to scale transformations, consistent with Axiom 3
    \item Under normality assumptions, it coincides with the maximum likelihood estimator
\end{itemize}

These axioms formalise the essential properties of relative metrics in competitive settings, establishing a rigorous foundation for the more applied aspects of our framework.

\begin{theorem}[Minimal Sufficiency]
Under Axioms 1-4, and assuming the measurement model where $X_A = \mu_A + \epsilon_A + \eta$
 and $X_B = \mu_B + \epsilon_B + \eta$ with normally distributed errors and shared environmental effects, the simple difference $R = X_A - X_B$ is the minimal sufficient statistic for estimating $\mu_A - \mu_B$ (see the measurement model in Section \ref{meas_mdl} for further details) .
\end{theorem}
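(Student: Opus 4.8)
The plan is to establish sufficiency via the Fisher--Neyman factorization theorem and minimal sufficiency via the Lehmann--Scheff\'e characterization, after first reducing the problem to a form in which the shared effect $\eta$ and the common location of the pair are cleanly separated from $\theta = \mu_A - \mu_B$. First I would record the concrete consequence of the measurement model: subtracting the two equations gives $R = X_A - X_B = (\mu_A - \mu_B) + (\epsilon_A - \epsilon_B)$, so $\eta$ cancels identically — this is Axiom 1 realized in coordinates. Under normality of the errors this makes $R \sim \mathcal{N}(\theta, \sigma_A^2 + \sigma_B^2)$, a distribution depending on the unknown parameters only through $\theta$.

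Next I would introduce the bijective linear reparametrization $(X_A, X_B) \mapsto (D, S)$ with $D = X_A - X_B$ and $S = X_A + X_B$. Since this map has constant Jacobian, it preserves all information in the sample. Writing the joint normal density in these coordinates, I would show it factors as $f(d, s) = f_1(d \mid \theta)\, f_2(s \mid \xi)$, where $\xi$ collects the nuisance location (and, through $\eta$, the shared effect). The key computation is $\mathrm{Cov}(D, S) = \sigma_A^2 - \sigma_B^2$, which vanishes under equal error variances; joint normality then upgrades this to independence of $D$ and $S$. Because the only factor carrying $\theta$ depends on the data through $D$ alone, the factorization theorem delivers sufficiency of $D$ for $\theta$, with $S$ ancillary for $\theta$.

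Finally, for minimal sufficiency I would invoke the Lehmann--Scheff\'e criterion. For two outcomes $(x_A, x_B)$ and $(y_A, y_B)$, I would form the likelihood ratio and examine its dependence on $\theta$: the exponent contains a term linear in $\theta$ whose coefficient is proportional to $(x_A - x_B) - (y_A - y_B)$, and this ratio is free of $\theta$ for every $\theta$ precisely when $x_A - x_B = y_A - y_B$. Hence the level sets of $D$ coincide exactly with the equivalence classes of the minimal sufficient partition, so $R = X_A - X_B$ is minimal sufficient for $\theta$. I would close by connecting back to Axiom 4, noting that this minimal sufficient statistic coincides with the MMSE and maximum-likelihood estimator of $\theta$, so the difference is not merely information-preserving but efficient.

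I expect the main obstacle to be conceptual rather than computational: making rigorous the sense in which $D$ is sufficient \emph{for $\theta$ alone} while the model still contains both the nuisance location and the shared effect $\eta$. The resolution hinges on the orthogonal decomposition $D \perp S$, which in turn requires the equal-variance assumption. I would flag that without it $\mathrm{Cov}(D, S) \neq 0$, so $S$ ceases to be ancillary for $\theta$ and the clean reduction to the difference no longer isolates $\theta$ — pinning down exactly which regularity conditions from Axiom 4 are doing this work is the delicate part of the argument.
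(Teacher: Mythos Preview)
Your approach and the paper's are essentially aligned on sufficiency: both invoke the Fisher--Neyman factorization after passing to the coordinates $D = X_A - X_B$ and $S = X_A + X_B$, with $\theta = \mu_A - \mu_B$ appearing only in the factor involving $D$. The paper states this factorization directly without deriving it; your version is more careful in computing $\mathrm{Cov}(D,S) = \sigma_A^2 - \sigma_B^2$ and flagging that the clean product form requires equal error variances --- a regularity condition the paper invokes implicitly through Axiom~4 but never makes explicit.

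Where you genuinely diverge is in the minimality step. The paper argues that minimality follows because $R$ is a function of the complete sufficient statistic $(X_A, X_B)$ and has the same dimension as the parameter of interest; this is a heuristic rather than a proof, since dimension-matching alone does not establish minimal sufficiency. Your route via the Lehmann--Scheff\'e likelihood-ratio criterion is the standard rigorous argument: showing that the ratio is free of $\theta$ if and only if the two data points share the same value of $D$ directly characterizes the minimal sufficient partition. This buys you an actual proof where the paper offers only a plausibility claim, and your closing caveat about the equal-variance assumption identifies precisely the gap the paper leaves open.
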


\begin{proof}
Under the measurement model we will specify in Section 2.2, the joint density of $X_A$ and $X_B$ can be factorised as:
\begin{align}
f(x_A, x_B | \mu_A, \mu_B, \sigma_A^2, \sigma_B^2, \sigma_\eta^2) = g(x_A - x_B, \mu_A - \mu_B) \cdot h(x_A + x_B, \mu_A + \mu_B, \sigma_A^2, \sigma_B^2, \sigma_\eta^2)
\end{align}

By the Fisher-Neyman factorisation theorem, $R = X_A - X_B$ is a sufficient statistic for $\mu_A - \mu_B$. Minimality follows from the fact that $R$ is a function of the complete sufficient statistic $(X_A, X_B)$ and the dimension of $R$ equals the dimension of the parameter of interest $\mu_A - \mu_B$.
\end{proof}

\begin{theorem}[Asymptotic Efficiency]
As sample size increases, the relative estimator $\hat{R}$ based on $X_A - X_B$ achieves the Cramér-Rao lower bound for estimating $\mu_A - \mu_B$ under environmental noise.
\end{theorem}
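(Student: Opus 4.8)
The plan is to reduce the full data to the per-competition differences, compute the Fisher information for $\theta := \mu_A - \mu_B$ directly, and verify that the variance of $\hat{R}$ matches the Cramér--Rao bound. Concretely, I would take $n$ i.i.d.\ replications $X_{A,i} = \mu_A + \epsilon_{A,i} + \eta_i$ and $X_{B,i} = \mu_B + \epsilon_{B,i} + \eta_i$, form the differences $D_i = X_{A,i} - X_{B,i}$, and observe that the shared effect cancels, so $D_i = \theta + (\epsilon_{A,i} - \epsilon_{B,i})$. Under normality and independence of errors, the $D_i$ are i.i.d.\ $N(\theta,\, \sigma_A^2 + \sigma_B^2)$, and the estimator $\hat{R} = \frac{1}{n}\sum_i D_i$ is exactly unbiased with variance $(\sigma_A^2 + \sigma_B^2)/n$.

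Next I would compute the information bound. Reparametrise the location parameters as $\theta = \mu_A - \mu_B$ (of interest) and $\psi = \mu_A + \mu_B$ (nuisance), pairing each $D_i$ with $S_i = X_{A,i} + X_{B,i}$. The vector $(D_i, S_i)$ is bivariate normal with mean $(\theta, \psi)$ and covariance matrix $\Sigma$ depending on $\sigma_A^2, \sigma_B^2, \sigma_\eta^2$ but not on the means. Because the covariance is free of the location parameters, the per-observation Fisher information for $(\theta, \psi)$ is exactly $\Sigma^{-1}$, so the information from $n$ observations is $I_n = n\,\Sigma^{-1}$ and $I_n^{-1} = \Sigma/n$. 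The Cramér--Rao bound for $\theta$ in the presence of the nuisance $\psi$ is then the $(\theta,\theta)$ entry $[I_n^{-1}]_{\theta\theta} = \Sigma_{11}/n = (\sigma_A^2 + \sigma_B^2)/n$, which coincides with $\mathrm{Var}(\hat{R})$. Hence $\hat{R}$ attains the bound.

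The main obstacle, and the step requiring genuine care, is the treatment of the nuisance parameters. First, when $\sigma_A^2 \neq \sigma_B^2$ the difference $D$ and the sum $S$ are correlated, so one might fear that discarding $S$ forfeits information about $\theta$; the computation above shows this correlation is exactly absorbed by the nuisance location $\psi$, so the effective information for $\theta$ equals the marginal information carried by $D$ alone. Second, when the variances are themselves unknown they enter as further nuisance parameters; here I would invoke the block-diagonality of the Gaussian information matrix between the mean parameters and the (co)variance parameters --- the expected mean--variance cross-derivatives of the log-likelihood vanish --- to conclude that unknown variances do not inflate the bound for $\theta$. This is where the \emph{asymptotic} qualifier earns its place: I would close by noting that the maximum-likelihood estimator of $\theta$ equals $\hat{R}$ and appeal to standard likelihood asymptotics (consistency of the plug-in variance estimators together with Slutsky's theorem), which yields efficiency in the limit, with exact attainment already holding at finite $n$ when the variances are known.
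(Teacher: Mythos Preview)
Your proof is correct and considerably more detailed than the paper's. The paper's argument is essentially a one-line appeal to general likelihood theory: it identifies the MLE of $\mu_A - \mu_B$ with the sample mean of $X_A - X_B$ and then invokes the standard fact that maximum-likelihood estimators are asymptotically efficient. No Fisher information is computed, no nuisance parameters are addressed, and no distinction is drawn between the known-variance and unknown-variance cases.

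Your route is different and, in a sense, stronger. By reparametrising to $(\theta,\psi)$, computing the Gaussian location information $n\Sigma^{-1}$, and reading off $[I_n^{-1}]_{\theta\theta} = (\sigma_A^2+\sigma_B^2)/n$, you establish that $\hat{R}$ attains the Cram\'er--Rao bound \emph{exactly} at every finite $n$ when the variances are known, rather than merely asymptotically. Your treatment of the nuisance issues --- the $D$--$S$ correlation when $\sigma_A^2\neq\sigma_B^2$, and the block-diagonality between mean and variance parameters --- fills gaps the paper simply does not engage with. What the paper's approach buys is brevity; what yours buys is an explicit verification that nothing is lost to the nuisance structure and a sharper finite-sample conclusion.
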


\begin{proof}
Under the normal distribution assumptions in our measurement model, the maximum likelihood estimator of $\mu_A - \mu_B$ is $\hat{\mu}_A - \hat{\mu}_B$, which simplifies to the sample mean of $X_A - X_B$. By the properties of maximum likelihood estimators, this estimator is asymptotically efficient, achieving the Cramér-Rao lower bound as sample size increases.
\end{proof}

With these axioms and theorems establishing the theoretical foundation, we now proceed to develop the specific measurement model and analyse its properties.

\subsection{Measurement Model}\label{meas_mdl}

Consider a competitive scenario where we measure the performance of two competitors, $A$ and $B$. Their observed performances are modelled as:
\begin{equation}
\label{eq:XA}
X_A = \mu_A + \epsilon_A + \eta 
\end{equation}
\begin{equation}
\label{eq:XB}
X_B = \mu_B + \epsilon_B + \eta
\end{equation}

Where:
\begin{itemize}
    \item $\mu_A$ and $\mu_B$ represent true performance levels—the intrinsic capabilities of each competitor
    \item $\epsilon_A \sim \mathcal{N}(0, \sigma_A^2)$ and $\epsilon_B \sim \mathcal{N}(0, \sigma_B^2)$ capture competitor-specific variations—the stochastic elements of performance unique to each competitor
    \item $\eta \sim \mathcal{N}(0, \sigma_\eta^2)$ represents shared environmental effects—the external factors affecting both competitors equally
\end{itemize}

The normal distribution assumption for $\epsilon_A$, $\epsilon_B$, and $\eta$ is justified by the central limit theorem, as performance metrics typically arise from the aggregation of numerous small, independent factors. This choice also maximises entropy when only means and variances are known, making minimal assumptions about the underlying processes.

This decomposition explicitly distinguishes between three sources of performance variation: stable intrinsic capability, competitor-specific fluctuation, and shared environmental influence.

\subsection{Relative Transformation}

Following the axiomatic foundation established in Section 2.1, we define the relative performance measurement as the difference between the competitors' observed performances:

\begin{equation}
    R = X_A - X_B
\end{equation}

\begin{theorem}[Environmental Cancellation]
\label{thm:env_cancel}
The relative performance measure eliminates shared environmental effects while preserving true performance differences:
\begin{align} \label{eq:env_cancel}
\begin{split}
    R &= X_A - X_B \\
    &= (\mu_A + \epsilon_A + \eta) - (\mu_B + \epsilon_B + \eta) \\
    &= (\mu_A - \mu_B) + (\epsilon_A - \epsilon_B)
\end{split}
\end{align}
\end{theorem}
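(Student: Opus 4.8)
The plan is to establish the identity by direct substitution, since the statement is an algebraic consequence of the measurement model rather than a result requiring any distributional machinery. Concretely, I would treat \eqref{eq:XA} and \eqref{eq:XB} as the defining equations and verify that the difference collapses to the intrinsic-plus-idiosyncratic form asserted on the final line.

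First I would substitute $X_A = \mu_A + \epsilon_A + \eta$ and $X_B = \mu_B + \epsilon_B + \eta$ into the definition $R = X_A - X_B$. Second, I would distribute the subtraction and regroup the six resulting terms by source, pairing the true-performance levels $\mu_A,\mu_B$, the competitor-specific fluctuations $\epsilon_A,\epsilon_B$, and the two copies of the environmental term $\eta$. Third, I would invoke the shared-effect hypothesis of the model: because the \emph{same} realisation of $\eta$ enters both equations, the environmental contribution is exactly $\eta - \eta = 0$, leaving $R = (\mu_A - \mu_B) + (\epsilon_A - \epsilon_B)$ as claimed.

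There is no genuine obstacle here; the only point demanding care is that the cancellation is exact precisely because $\eta$ is literally the same random variable in both measurements, not two independent draws sharing the variance $\sigma_\eta^2$. I would flag this explicitly, noting that independent environmental terms $\eta_A,\eta_B$ would leave a residual $\eta_A - \eta_B$ and defeat the cancellation entirely. Finally, as a consistency check rather than a separate argument, I would observe that the resulting identity immediately confirms Axiom 1: replacing $\eta$ by $\eta + c$ for any constant $c$ leaves $R$ unchanged, so the simple difference is invariant to shared effects by construction, tying the theorem back to the axiomatic foundation.
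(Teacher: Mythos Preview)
Your proposal is correct and matches the paper's approach exactly: the paper presents no separate proof environment for this theorem, as the chain of equalities displayed in \eqref{eq:env_cancel} \emph{is} the derivation by direct substitution, precisely as you outline. Your additional remarks---that cancellation relies on the same realisation of $\eta$ appearing in both measurements, and the explicit tie-back to Axiom~1---are sound elaborations that the paper leaves implicit.
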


This theorem demonstrates the fundamental advantage of relative performance metrics: they systematically eliminate shared environmental noise while preserving the signal of interest—the difference in true performance levels. It also provides a direct mathematical verification of Axiom 1 (Invariance to Shared Effects).

\begin{theorem}[Distributional Properties]
\label{thm:dist_props}
The relative performance measure follows a normal distribution:
\begin{equation}
    R \sim \mathcal{N}(\mu_A - \mu_B, \sigma_A^2 + \sigma_B^2)
\end{equation}
\end{theorem}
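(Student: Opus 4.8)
The plan is to establish the distributional result by combining Theorem~\ref{thm:env_cancel} (Environmental Cancellation) with standard facts about linear combinations of independent normal random variables. From Theorem~\ref{thm:env_cancel} we already know that $R = (\mu_A - \mu_B) + (\epsilon_A - \epsilon_B)$, so the environmental term $\eta$ has been removed algebraically. What remains is to characterise the distribution of the random part $\epsilon_A - \epsilon_B$ and then shift it by the deterministic constant $\mu_A - \mu_B$.

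First I would recall from the measurement model in Section~\ref{meas_mdl} that $\epsilon_A \sim \mathcal{N}(0, \sigma_A^2)$ and $\epsilon_B \sim \mathcal{N}(0, \sigma_B^2)$ are independent. A linear combination of independent normal variables is itself normal, so $\epsilon_A - \epsilon_B$ is normally distributed. Next I would compute its two parameters directly: by linearity of expectation, $\mathbb{E}[\epsilon_A - \epsilon_B] = 0 - 0 = 0$; and by independence the variances add (with the sign of the coefficient squared away), giving $\mathrm{Var}(\epsilon_A - \epsilon_B) = \mathrm{Var}(\epsilon_A) + (-1)^2\,\mathrm{Var}(\epsilon_B) = \sigma_A^2 + \sigma_B^2$. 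Hence $\epsilon_A - \epsilon_B \sim \mathcal{N}(0, \sigma_A^2 + \sigma_B^2)$.

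Finally I would add back the constant $\mu_A - \mu_B$. Adding a deterministic scalar to a normal random variable shifts its mean while leaving its variance unchanged, so $R \sim \mathcal{N}(\mu_A - \mu_B,\ \sigma_A^2 + \sigma_B^2)$, which is exactly the claim. The computation of $\mathbb{E}[R]$ and $\mathrm{Var}(R)$ could alternatively be carried out by taking expectation and variance of the full expression in~\eqref{eq:env_cancel} in one step, using that $\eta$ has already cancelled.

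I do not anticipate a serious obstacle here, since the result follows immediately once closure of the normal family under linear combinations is invoked. The only point requiring care is the independence assumption: the variance-addition step relies on $\epsilon_A$ and $\epsilon_B$ being independent (so their covariance vanishes), and I would make sure this is stated explicitly rather than assumed silently. If one wished to be fully rigorous about closure under linear combinations rather than quoting it, the cleanest route would be a moment-generating-function argument: the MGF of $\epsilon_A - \epsilon_B$ factors as the product of the individual MGFs evaluated at $t$ and $-t$, and matching the resulting exponent $\tfrac{1}{2}(\sigma_A^2 + \sigma_B^2)t^2$ to the normal MGF confirms both the normality and the parameters simultaneously.
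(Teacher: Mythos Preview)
Your argument is correct. You take a slightly different route from the paper: you first invoke Theorem~\ref{thm:env_cancel} to cancel $\eta$ algebraically, leaving the \emph{independent} pair $(\epsilon_A,\epsilon_B)$, and then apply the simple fact that variances of independent normals add. The paper instead works directly with the \emph{correlated} pair $(X_A,X_B)$, noting that $\mathrm{Cov}(X_A,X_B)=\sigma_\eta^2$ and computing $\mathrm{Var}(X_A)+\mathrm{Var}(X_B)-2\,\mathrm{Cov}(X_A,X_B)$, so that the $2\sigma_\eta^2$ terms cancel at the variance level rather than in the random-variable expression itself. Your route is a touch more elementary since it sidesteps the covariance bookkeeping; the paper's route has the advantage of making the covariance structure of $(X_A,X_B)$ explicit, which it reuses in the SNR analysis of Section~\ref{sec:SNR_analysis}. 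Both arrive at the same place for the same underlying reason.
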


\begin{proof}
By the properties of normal distributions:
\begin{enumerate}
    \item $X_A \sim \mathcal{N}(\mu_A, \sigma_A^2 + \sigma_\eta^2)$
    \item $X_B \sim \mathcal{N}(\mu_B, \sigma_B^2 + \sigma_\eta^2)$
    \item Linear combinations of normal variables remain normal
    \item The covariance between $X_A$ and $X_B$ is $\sigma_\eta^2$ due to shared environmental effects
\end{enumerate}

The variance of the difference is:
\begin{align*}
    \text{Var}(R) &= \text{Var}(X_A - X_B) \\
    &= \text{Var}(X_A) + \text{Var}(X_B) - 2\text{Cov}(X_A, X_B) \\
    &= (\sigma_A^2 + \sigma_\eta^2) + (\sigma_B^2 + \sigma_\eta^2) - 2\sigma_\eta^2 \\
    &= \sigma_A^2 + \sigma_B^2
\end{align*}
Where we use $\text{Cov}(X_A, X_B) = \text{Cov}(\epsilon_A, \epsilon_B) + \text{Cov}(\eta, \eta) = 0 + \sigma_\eta^2 = \sigma_\eta^2$, since $\epsilon_A$ and $\epsilon_B$ are independent by assumption. Thus, $R \sim \mathcal{N}(\mu_A - \mu_B, \sigma_A^2 + \sigma_B^2)$.
\end{proof}

This distributional characterisation directly supports Axiom 2 (Ordinal Consistency), as $\mathbb{E}[R] = \mu_A - \mu_B$, which is positive when $\mu_A > \mu_B$.

\subsection{Signal-to-Noise Ratio Analysis}\label{sec:SNR_analysis}

The fundamental advantage of relativisation can be quantified through signal-to-noise ratio (SNR) analysis, which directly links our theoretical framework to practical performance prediction. This analysis establishes when and why relative metrics outperform absolute ones across different parameter regimes.

\subsubsection{SNR Improvement Through Relativisation}

For performance measurements following our model, the signal-to-noise ratio represents the ratio of meaningful signal (true performance difference) to measurement uncertainty (variance). This provides quantitative justification for Axiom 4 (Optimality).

\begin{theorem}[SNR Improvement]
\label{thm:snr_imp}
The signal-to-noise ratio improvement from relativisation, using both ($X_A$ and $X_B$) absolute measurements, is:
\begin{align}
\begin{split}
    \frac{\text{SNR}_{\text{rel}}}{\text{SNR}_{\text{pair-abs}}} &= \frac{(\sigma_A^2 + \sigma_\eta^2) + (\sigma_B^2 + \sigma_\eta^2)}{\sigma_A^2 + \sigma_B^2} \cdot \frac{(\mu_A - \mu_B)^2}{(\mu_A - \mu_B)^2}\\
    &= 1 + \frac{2\sigma_\eta^2}{\sigma_A^2 + \sigma_B^2}
    \end{split}
\end{align}
where $\sigma_\eta^2$ is the environmental noise variance and $\sigma_A^2, \sigma_B^2$ are the competitor-specific variances. The factor 2 in the numerator reflects that both measurements contain the environmental noise component $\sigma_\eta^2$, which affects the combined variance when using $X_A$ and $X_B$ as separate features.
\end{theorem}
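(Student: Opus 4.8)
The plan is to reduce the statement to a direct computation of two signal-to-noise ratios and then take their quotient. First I would fix a working definition of SNR as the ratio of squared signal to total variance, $\text{SNR} = (\text{signal})^2 / (\text{noise})$, taking the common signal to be the true performance gap $\mu_A - \mu_B$ in both the relative and the pair-absolute settings, since both quantities are ultimately estimators of this same difference. This framing is what makes the ratio meaningful: the numerator signal is identical in the two cases, so only the noise terms distinguish them.

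For the relative metric, I would invoke the Distributional Properties theorem, which already gives $R \sim \mathcal{N}(\mu_A - \mu_B, \sigma_A^2 + \sigma_B^2)$. Hence the relative SNR is immediately $\text{SNR}_{\text{rel}} = (\mu_A - \mu_B)^2 / (\sigma_A^2 + \sigma_B^2)$, with no further work; the substantive point is precisely that the environmental variance $\sigma_\eta^2$ has already been removed in forming $R$, as established by the Environmental Cancellation theorem. For the pair-absolute case I would treat $X_A$ and $X_B$ as two separate features and take the noise to be the sum of their marginal variances. From the intermediate steps of the distributional-properties proof, $X_A \sim \mathcal{N}(\mu_A, \sigma_A^2 + \sigma_\eta^2)$ and $X_B \sim \mathcal{N}(\mu_B, \sigma_B^2 + \sigma_\eta^2)$, so the combined noise is $(\sigma_A^2 + \sigma_\eta^2) + (\sigma_B^2 + \sigma_\eta^2)$ and $\text{SNR}_{\text{pair-abs}} = (\mu_A - \mu_B)^2 / [(\sigma_A^2 + \sigma_\eta^2) + (\sigma_B^2 + \sigma_\eta^2)]$.

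Forming the quotient, the squared-signal factors cancel and the variance terms collapse. The final simplification of $(\sigma_A^2 + \sigma_B^2 + 2\sigma_\eta^2)/(\sigma_A^2 + \sigma_B^2)$ to $1 + 2\sigma_\eta^2/(\sigma_A^2 + \sigma_B^2)$ is routine algebra. I expect the main obstacle to be not the computation but the definitional choice in the pair-absolute noise term: I would justify that because $X_A$ and $X_B$ each independently carry the shared component $\eta$ when used as separate features (rather than differenced), the environmental variance enters the combined noise twice, and this is exactly what produces the factor of $2$ the theorem highlights. I would also flag the implicit assumption $\mu_A \neq \mu_B$, which is needed so that the common signal factor is nonzero and cancels cleanly rather than yielding an indeterminate form.
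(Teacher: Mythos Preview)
Your proposal is correct and follows essentially the same approach as the paper: define $\text{SNR}_{\text{rel}} = (\mu_A - \mu_B)^2/(\sigma_A^2 + \sigma_B^2)$ via the distributional result for $R$, define $\text{SNR}_{\text{pair-abs}}$ with noise equal to the sum of the two marginal variances $(\sigma_A^2 + \sigma_\eta^2) + (\sigma_B^2 + \sigma_\eta^2)$, and divide. If anything, your write-up is more direct than the paper's own proof, which detours through the single-feature absolute case and the covariance $\text{Cov}(X_A, X_B) = \sigma_\eta^2$ before arriving at the two-feature covariance matrix; that extra material is not needed to establish the displayed identity and you are right to omit it.
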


For performance measurements following our model, the signal-to-noise ratio represents the ratio of meaningful signal (true performance difference) to measurement uncertainty (variance). 

\begin{proof}
The signal-to-noise ratio for a single-feature absolute measurement is:
\begin{align}
\text{SNR}_{\text{single-abs}} &= \frac{(\mu_A - \mu_B)^2}{\sigma_A^2 + \sigma_\eta^2}
\end{align}

For the relative measurement:
\begin{align}
\text{SNR}_{\text{rel}} &= \frac{(\mu_A - \mu_B)^2}{\sigma_A^2 + \sigma_B^2}
\end{align}

The covariance between $X_A$ and $X_B$ arises solely from the shared environmental component $\eta$:
\begin{align}
\begin{split}
\text{Cov}(X_A, X_B) &= \text{Cov}(\mu_A + \epsilon_A + \eta, \mu_B + \epsilon_B + \eta) \\
&= \text{Cov}(\epsilon_A, \epsilon_B) + \text{Cov}(\epsilon_A, \eta) + \text{Cov}(\eta, \epsilon_B) + \text{Cov}(\eta, \eta) \\
&= 0 + 0 + 0 + \sigma_\eta^2 \\
&= \sigma_\eta^2
\end{split}
\end{align}

This leads to the improvement ratio:
\begin{align}
\begin{split}
\frac{\text{SNR}_{\text{rel}}}{\text{SNR}_{\text{single-abs}}} &= \frac{(\mu_A - \mu_B)^2/(\sigma_A^2 + \sigma_B^2)}{(\mu_A - \mu_B)^2/(\sigma_A^2 + \sigma_\eta^2)} \\
&= \frac{\sigma_A^2 + \sigma_\eta^2}{\sigma_A^2 + \sigma_B^2} \\
&= 1 + \frac{\sigma_\eta^2 - \sigma_B^2}{\sigma_A^2 + \sigma_B^2}
\end{split}
\end{align}

When environmental noise dominates ($\sigma_\eta^2 \gg \sigma_B^2$), this simplifies to:
\begin{align}
\frac{\text{SNR}_{\text{rel}}}{\text{SNR}_{\text{single-abs}}} &\approx 1 + \frac{\sigma_\eta^2}{\sigma_A^2 + \sigma_B^2}
\end{align}

For the case where both absolute measurements are available (two-feature absolute predictor), the covariance structure becomes critical. The covariance matrix of these measurements is:
\begin{align}
\Sigma_{\text{two-abs}} &= \begin{pmatrix} 
\sigma_A^2 + \sigma_\eta^2 & \sigma_\eta^2 \\
\sigma_\eta^2 & \sigma_B^2 + \sigma_\eta^2 
\end{pmatrix}
\end{align}

with this covariance structure, the SNR equivalence between two-feature absolute and relative predictors emerges, explaining our empirical findings in Section 5.
\end{proof}

This theorem quantifies the advantage of relative metrics: the larger the environmental noise relative to competitor-specific variation, the greater the improvement from relativisation. In real-world settings where environmental effects often dominate individual variations, this improvement can be substantial. This theorem has direct implications for the three performance metrics developed in Section 3. Specifically, when the conditions for relative superiority are met, we can establish theoretical guarantees for improved separability ($S$), higher information content ($I$), and increased effect size ($d$). The magnitude of these improvements scales with the environmental noise ratio ${\sigma_\eta^2}/({\sigma_A^2 + \sigma_B^2})$, with each metric exhibiting different sensitivity patterns across the parameter space. Section 3.5 quantifies these relationships explicitly, demonstrating how SNR improvement translates to practical performance advantages for each metric.

\subsubsection{Theoretical Performance Bounds}

The improvement to the SNR directly translates to enhanced separability and information content, establishing theoretical bounds on achievable performance:

\begin{theorem}[Relative Superiority]
\label{thm:relative_superiority}
Under the following conditions:
\begin{enumerate}
\item The shared environmental effects $\eta$ are significant compared to individual variations $\epsilon_A$, $\epsilon_B$
\item Competitors $A$ and $B$ face identical external conditions
\item The true performance difference $|\mu_A - \mu_B|$ is small relative to the environmental noise $|\eta|$
\end{enumerate}
the expected performance of a relative metric $R = X_A - X_B$ exceeds that of isolated metrics $X_A$, $X_B$ for predicting binary outcome $\Omega$:
\begin{equation}
\mathbb{E}[P_R(\Omega)] > \max\{\mathbb{E}[P_{X_A}(\Omega)], \mathbb{E}[P_{X_B}(\Omega)]\}
\end{equation}
where $P_R$, $P_{X_A}$, $P_{X_B}$ represent the predictive capabilities of the relative and isolated metrics respectively.
\end{theorem}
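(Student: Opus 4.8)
The plan is to reduce the claim to a monotonicity statement between predictive capability and signal-to-noise ratio, and then invoke Theorem~\ref{thm:snr_imp} to order the relevant ratios. First I would fix the binary prediction task precisely: let $\Omega = \mathbf{1}[\mu_A > \mu_B]$ denote the event that $A$ is the genuinely superior competitor, and take $P_Z(\Omega)$ to be the Bayes-optimal accuracy of predicting $\Omega$ from a single scalar feature $Z \in \{R, X_A, X_B\}$ via a threshold rule. Under the Gaussian measurement model of Section~\ref{meas_mdl}, each feature is normally distributed with a class-dependent mean separation and a class-independent variance, so this is a one-dimensional Gaussian discrimination problem.

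The key lemma I would establish is that for such a problem the Bayes accuracy is a strictly increasing function of the feature's SNR. Concretely, for a feature with mean separation $\Delta$ and within-class standard deviation $\tau$, the optimal threshold accuracy equals $\Phi\!\left(\tfrac{1}{2}\sqrt{\mathrm{SNR}_Z}\right)$ with $\mathrm{SNR}_Z = \Delta^2/\tau^2$ and $\Phi$ the standard normal CDF; since $\Phi$ is strictly monotone, $P_Z(\Omega) = \psi(\mathrm{SNR}_Z)$ for a common strictly increasing $\psi$. I would then read off the three SNRs from the model: by Theorem~\ref{thm:dist_props}, $\mathrm{SNR}_R = (\mu_A-\mu_B)^2/(\sigma_A^2+\sigma_B^2)$, while the isolated features retain the environmental term, giving $\mathrm{SNR}_{X_A} = (\mu_A-\mu_B)^2/(\sigma_A^2+\sigma_\eta^2)$ and $\mathrm{SNR}_{X_B} = (\mu_A-\mu_B)^2/(\sigma_B^2+\sigma_\eta^2)$.

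With these in hand, the inequality follows from the ratio computed in Theorem~\ref{thm:snr_imp}. Condition 2 (identical external conditions) is exactly what licenses the cancellation of $\eta$ in $R$, and Condition 1 ($\eta$ significant relative to $\epsilon_A,\epsilon_B$) supplies $\sigma_\eta^2 > \sigma_A^2$ and $\sigma_\eta^2 > \sigma_B^2$, whence $\mathrm{SNR}_R > \mathrm{SNR}_{X_A}$ and $\mathrm{SNR}_R > \mathrm{SNR}_{X_B}$. Applying the strictly increasing link $\psi$ yields $\mathbb{E}[P_R(\Omega)] = \psi(\mathrm{SNR}_R) > \psi\!\left(\max\{\mathrm{SNR}_{X_A},\mathrm{SNR}_{X_B}\}\right) = \max\{\mathbb{E}[P_{X_A}(\Omega)], \mathbb{E}[P_{X_B}(\Omega)]\}$, where the expectation averages over the environmental draw that inflates the effective noise of $X_A, X_B$ by $\sigma_\eta^2$ but leaves $R$ untouched. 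Condition 3 is used to keep the comparison non-degenerate: when $|\mu_A-\mu_B|$ is small relative to $|\eta|$ the absolute features are badly confounded, so the gap $\psi(\mathrm{SNR}_R) - \psi(\mathrm{SNR}_{X_A})$ is bounded away from zero rather than vanishing in a saturated, near-perfect-accuracy regime.

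The hard part will be pinning down a defensible definition of ``predictive capability'' and its monotone link to SNR, since the statement leaves $P_Z$ abstract. The threshold-accuracy route above is clean but presupposes the Bayes-optimal classifier and a well-defined reference mean for the single-feature rules; I would either justify this via the minimal sufficiency result established earlier, or, more robustly, replace accuracy by a functional that is monotone in SNR for \emph{any} reasonable classifier, such as the AUC or the mutual information $I(Z;\Omega)$, both of which are strictly increasing in $\Delta^2/\tau^2$ for Gaussian features. The remaining care is bookkeeping in the expectation over $\eta$: the invariance of $R$ must be contrasted against the averaging of the $\eta$-induced shift in $X_A$ and $X_B$, which is precisely where Condition 2 enters the argument.
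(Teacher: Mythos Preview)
Your proposal is correct and follows essentially the same route as the paper: compute the three SNRs, observe that Condition~1 forces $\sigma_\eta^2 > \sigma_A^2,\sigma_B^2$ and hence $\mathrm{SNR}_R > \max\{\mathrm{SNR}_{X_A},\mathrm{SNR}_{X_B}\}$, and then lift this ordering to predictive accuracy via the monotonicity of $\Phi$. If anything, your treatment is more careful than the paper's in pinning down what $P_Z(\Omega)$ means and in explicitly handling both isolated features rather than just one.
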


\begin{proof}
Under conditions (1.) and (2.), the relative metric $R$ systematically cancels shared environmental effects as demonstrated in Theorem \ref{thm:env_cancel}. For the boundary case where $|\mu_A - \mu_B|$ approaches zero, both metrics approach chance performance ($S \rightarrow 0.5$) regardless of noise levels. However, the relative metric maintains superior sensitivity to small differences when environmental noise is present.

The signal-to-noise ratio for the relative metric is:
\begin{align}
\text{SNR}_{\text{rel}} &= \frac{(\mu_A - \mu_B)^2}{\sigma_A^2 + \sigma_B^2}
\end{align}

For an isolated metric (e.g., $X_A$), the signal-to-noise ratio is:
\begin{align}
\text{SNR}_{\text{abs}} &= \frac{(\mu_A - \mu_B)^2}{\sigma_A^2 + \sigma_\eta^2}
\end{align}

Under condition (1.), $\sigma_\eta^2 \gg \sigma_A^2, \sigma_B^2$, which implies:
\begin{align}
\begin{split}
\text{SNR}_{\text{rel}} &= \frac{(\mu_A - \mu_B)^2}{\sigma_A^2 + \sigma_B^2} \\
&\gg \frac{(\mu_A - \mu_B)^2}{\sigma_A^2 + \sigma_\eta^2} = \text{SNR}_{\text{abs}}
\end{split}
\end{align}

This higher SNR directly translates to improved separability:
\begin{align}
S_{\text{rel}} &= \Phi\left(\frac{|\mu_A - \mu_B|}{\sqrt{\sigma_A^2 + \sigma_B^2}}\right) \\
&> \Phi\left(\frac{|\mu_A - \mu_B|}{\sqrt{\sigma_A^2 + \sigma_\eta^2}}\right) = S_{\text{abs}}
\end{align}

In the complementary boundary case where $\sigma_\eta$ approaches zero, the SNR improvement ratio simplifies to:
\begin{align}
\frac{\text{SNR}_{\text{rel}}}{\text{SNR}_{\text{abs}}} &= \frac{\sigma_A^2 + 0}{\sigma_A^2 + \sigma_B^2} = \frac{\sigma_A^2}{\sigma_A^2 + \sigma_B^2}
\end{align}

If $\sigma_A = \sigma_B$, this equals 0.5, indicating that the absolute metric could outperform in this specific scenario. However, this case is rarely encountered in competitive settings where environmental factors typically contribute significantly to measurement variance.

Thus, under the specified conditions, $\mathbb{E}[P_R(\Omega)] > \max\{\mathbb{E}[P_{X_A}(\Omega)], \mathbb{E}[P_{X_B}(\Omega)]\}$.
\end{proof}

\subsubsection{Connection to the Two-Feature Absolute Predictor}
To explain the empirical observation that the two-feature absolute predictor performs similarly to the relative predictor (see Section 5), we must analyse the SNR for the two-feature case.

\begin{proof}[Extended Proof of Theorem \ref{thm:snr_imp}]
For the two-feature absolute predictor, both measurements $X_A$ and $X_B$ are available simultaneously. The covariance matrix of these measurements is:
\begin{align}
\Sigma_{\text{two-abs}} &= \begin{pmatrix} 
\sigma_A^2 + \sigma_\eta^2 & \sigma_\eta^2 \\
\sigma_\eta^2 & \sigma_B^2 + \sigma_\eta^2 
\end{pmatrix}
\end{align}

The critical insight is the non-zero covariance $\text{Cov}(X_A, X_B) = \sigma_\eta^2$ due to shared environmental effects. When both measurements are available, this covariance information can be exploited to effectively cancel environmental noise.

For two features with this covariance structure, the signal-to-noise ratio becomes:
\begin{align}
\text{SNR}_{\text{two-abs}} &= \begin{pmatrix} \mu_A - \mu_B & \mu_B - \mu_A \end{pmatrix} \Sigma_{\text{two-abs}}^{-1} \begin{pmatrix} \mu_A - \mu_B \\ \mu_B - \mu_A \end{pmatrix}
\end{align}

Inverting the covariance matrix:
\begin{align}
\Sigma_{\text{two-abs}}^{-1} &= \frac{1}{\det(\Sigma_{\text{two-abs}})} \begin{pmatrix} 
\sigma_B^2 + \sigma_\eta^2 & -\sigma_\eta^2 \\
-\sigma_\eta^2 & \sigma_A^2 + \sigma_\eta^2 
\end{pmatrix}
\end{align}

where $\det(\Sigma_{\text{two-abs}}) = (\sigma_A^2 + \sigma_\eta^2)(\sigma_B^2 + \sigma_\eta^2) - (\sigma_\eta^2)^2 = \sigma_A^2\sigma_B^2 + \sigma_A^2\sigma_\eta^2 + \sigma_B^2\sigma_\eta^2$.

Calculating the SNR:
\begin{align}
\begin{split}
\text{SNR}_{\text{two-abs}} &= \frac{(\mu_A - \mu_B)^2(\sigma_B^2 + \sigma_\eta^2 + \sigma_A^2 + \sigma_\eta^2 + 2\sigma_\eta^2)}{\sigma_A^2\sigma_B^2 + \sigma_A^2\sigma_\eta^2 + \sigma_B^2\sigma_\eta^2} \\
&= \frac{(\mu_A - \mu_B)^2(\sigma_A^2 + \sigma_B^2 + 4\sigma_\eta^2)}{\sigma_A^2\sigma_B^2 + \sigma_A^2\sigma_\eta^2 + \sigma_B^2\sigma_\eta^2}
\end{split}
\end{align}

When $\sigma_\eta^2 \gg \sigma_A^2, \sigma_B^2$, this simplifies to:
\begin{align}
\begin{split}
\text{SNR}_{\text{two-abs}} &\approx \frac{(\mu_A - \mu_B)^2 \cdot 4\sigma_\eta^2}{(\sigma_A^2 + \sigma_B^2)\sigma_\eta^2} \\
&= \frac{4(\mu_A - \mu_B)^2}{\sigma_A^2 + \sigma_B^2} \\
&= 4 \cdot \text{SNR}_{\text{rel}}
\end{split}
\end{align}

This constant factor of 4 does not affect classification boundary orientation, only its scaling. In practice, learning algorithms adjust decision thresholds automatically, making the two-feature absolute predictor functionally equivalent to the relative predictor. This explains our empirical observation that both methods achieve similar performance despite different formulations, explaining our empirical findings in Section 5.

Furthermore, the optimal linear combination of $X_A$ and $X_B$ for the two-feature predictor is:
\begin{align}
w_A X_A + w_B X_B = c
\end{align}

where $w_A = 1$ and $w_B = -1$ (after normalisation), exactly matching the relative transformation $R = X_A - X_B$. Thus, the two-feature absolute predictor implicitly learns to perform the relativisation operation, explaining the empirical equivalence.
\end{proof}

The SNR improvement analysis, together with the minimal sufficiency and asymptotic efficiency theorems, provides comprehensive theoretical justification for the optimality of the simple difference as a relative performance metric, fully supporting the axioms established in Section 2.1. This SNR improvement directly translates to enhanced performance across all metrics introduced in Section 3, as we will demonstrate in Section 3.5

\newpage
\section{Performance Metrics for Comparison}

We utilise three complementary metrics—separability \cite{Tatsuoka1989sep}, information content \cite{shannon1948mathematical}, and effect size (Mahalanobis distance \cite{mahalanobis1936generalized}) —that provide different perspectives on performance effectiveness. Our approach explicitly links these metrics through mathematical relationships, demonstrating how they all derive from the standardised relative difference between competitors, while offering distinct sensitivity profiles for different performance regimes. 

\subsection{Separability}

Separability ($S$) \cite{Tatsuoka1989sep}quantifies the probability of correctly ordering competitors based on their relative performance. This metric directly addresses a fundamental question in competitive analysis: how reliably can we distinguish superior performance?

\begin{equation}
S = \Phi\left(\frac{\mu_R}{\sigma_R}\right) = \Phi\left(\frac{d}{2}\right)
\end{equation}

where $\Phi$ is the standard normal cumulative distribution function and $\mu_R = \mu_A - \mu_B$, $\sigma_R^2 = \sigma_A^2 + \sigma_B^2$.

This metric ranges from 0.5 (random ordering) to 1.0 (perfect separation), providing an intuitive measure of discriminative power. The relationship with effect size ($d$) demonstrates how increased performance differences lead to more reliable ordering.

\subsection{Information Content}

Information content ($I$) \cite{shannon1948mathematical} measures the reduction in outcome uncertainty provided by the relative metric. Based on information theory principles, this metric quantifies the predictive power of relative performance differences:

\begin{equation}
I = 1 - H(S) = 1 - H\left(\Phi\left(\frac{d}{2}\right)\right)
\end{equation}

where $H(p) = -p\log_2(p) - (1-p)\log_2(1-p)$ is the binary entropy function.

The metric ranges from 0 (no predictive power) to 1 (perfect prediction), offering a sophisticated measure of how much relative performance tells us about competitive outcomes.

\subsection{Mahalanobis Distance and Effect Size}
In our framework, the Mahalanobis distance ($\boldsymbol{D}_M$) \cite{mahalanobis1936generalized} serves as the foundation for quantifying performance differences across both univariate and multivariate settings.

\begin{definition}[Performance Distance]
\label{def:mahalanobis}
The Mahalanobis performance distance between competitors A and B in $p$-dimensional space is:
\begin{equation}
    \boldsymbol{D}_M = \sqrt{(\boldsymbol{\mu}_A - \boldsymbol{\mu}_B)^T \boldsymbol{\Sigma}^{-1} (\boldsymbol{\mu}_A - \boldsymbol{\mu}_B)}
\end{equation}
where $\boldsymbol{\Sigma} = (\boldsymbol{\Sigma}_A + \boldsymbol{\Sigma}_B)/2$ is the pooled covariance matrix.
\end{definition}

\begin{proposition}[Univariate Simplification]
\label{prop:univariate_mahalanobis}
When $p = 1$, the Mahalanobis distance simplifies to:
\begin{equation}
    D_M = \frac{|\mu_A - \mu_B|}{\sqrt{\sigma_A^2 + \sigma_B^2}}
\end{equation}
\end{proposition}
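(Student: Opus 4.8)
The plan is to prove Proposition \ref{prop:univariate_mahalanobis} by directly specialising the multivariate Definition \ref{def:mahalanobis} to the scalar case $p=1$. Because every vector and matrix collapses to a scalar, this is essentially a substitution argument: set the dimension to one, evaluate the quadratic form, and simplify. The only genuine content lies in handling the pooling factor correctly, so most of the effort is bookkeeping rather than conceptual.

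First I would reduce each object in the definition to its scalar form. When $p=1$, the mean difference $\boldsymbol{\mu}_A - \boldsymbol{\mu}_B$ becomes the scalar $\mu_A - \mu_B$; the covariance matrices $\boldsymbol{\Sigma}_A, \boldsymbol{\Sigma}_B$ become the variances $\sigma_A^2, \sigma_B^2$; the inverse $\boldsymbol{\Sigma}^{-1}$ becomes the reciprocal $1/\Sigma$; and the quadratic form $(\boldsymbol{\mu}_A - \boldsymbol{\mu}_B)^T \boldsymbol{\Sigma}^{-1}(\boldsymbol{\mu}_A - \boldsymbol{\mu}_B)$ collapses to $(\mu_A - \mu_B)^2/\Sigma$. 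Taking the positive square root then gives the result, where $\sqrt{(\mu_A - \mu_B)^2} = |\mu_A - \mu_B|$ is precisely why the absolute value appears in the numerator.

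The one step requiring genuine care—and what I expect to be the main obstacle—is the treatment of the pooling factor in $\boldsymbol{\Sigma} = (\boldsymbol{\Sigma}_A + \boldsymbol{\Sigma}_B)/2$. Substituting $\Sigma = (\sigma_A^2 + \sigma_B^2)/2$ verbatim yields $D_M = \sqrt{2}\,|\mu_A - \mu_B|/\sqrt{\sigma_A^2 + \sigma_B^2}$, i.e. an extra factor of $\sqrt{2}$ relative to the claimed expression. To land on the stated form exactly, the univariate dispersion in the quadratic form must be identified with the variance of the relative statistic $R = X_A - X_B$, namely $\sigma_A^2 + \sigma_B^2$ as established in Theorem \ref{thm:dist_props}, rather than with the arithmetic average of the two variances. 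I would therefore make the identification $\Sigma \equiv \mathrm{Var}(R)$ explicit at the outset, so that the denominator $\sqrt{\sigma_A^2 + \sigma_B^2}$ emerges directly and the proposition is tied back to the relative transformation that motivates it; once that convention is fixed, the remaining manipulation is purely routine scalar algebra.
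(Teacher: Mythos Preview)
Your proposal is correct and mirrors the paper's own proof almost exactly: both reduce the multivariate definition to scalars, both obtain the intermediate $\sqrt{2}\,|\mu_A-\mu_B|/\sqrt{\sigma_A^2+\sigma_B^2}$ from the literal pooled variance $(\sigma_A^2+\sigma_B^2)/2$, and both resolve the extra $\sqrt{2}$ by adopting the convention that the relevant dispersion is $\mathrm{Var}(R)=\sigma_A^2+\sigma_B^2$ rather than the averaged pool. The only cosmetic difference is that you propose fixing that identification up front, whereas the paper computes first and then declares the simplified form as the adopted convention.
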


\begin{proof}
The multivariate Mahalanobis distance definition simplifies through the following explicit steps:
\begin{enumerate}
    \item In the univariate case, $\boldsymbol{\mu}_A$ and $\boldsymbol{\mu}_B$ become scalars $\mu_A$ and $\mu_B$.
    \item The covariance matrices $\boldsymbol{\Sigma}_A$ and $\boldsymbol{\Sigma}_B$ reduce to scalar variances $\sigma^2_A$ and $\sigma^2_B$.
    \item The pooled covariance $\boldsymbol{\Sigma} = \frac{\sigma^2_A + \sigma^2_B}{2}$ becomes a scalar.
    \item The inverse of this scalar is simply $\boldsymbol{\Sigma}^{-1} = \frac{2}{\sigma^2_A + \sigma^2_B}$.
\end{enumerate}

Substituting these simplifications into the general formula:
\begin{align}
\begin{split}
D_M &= \sqrt{(\mu_A - \mu_B)^T \cdot \frac{2}{\sigma^2_A + \sigma^2_B} \cdot (\mu_A - \mu_B)} \\
&= \sqrt{\frac{2(\mu_A - \mu_B)^2}{\sigma^2_A + \sigma^2_B}} \\
&= \frac{\sqrt{2} \cdot |\mu_A - \mu_B|}{\sqrt{\sigma^2_A + \sigma^2_B}}
\end{split}
\end{align}

The factor $\sqrt{2}$ arises from our definition of the pooled variance as the average of individual variances. In practice, for univariate relative metrics, we adopt the simplified form:
\begin{equation}
D_M = \frac{|\mu_A - \mu_B|}{\sqrt{\sigma^2_A + \sigma^2_B}}
\end{equation}
where the denominator represents the standard deviation of the difference measure $R = X_A - X_B$.
\end{proof}

This univariate Mahalanobis distance is directly related to the classical effect size ($d$) from statistical literature:
\begin{equation}
d = 2D_M = \frac{2|\mu_A - \mu_B|}{\sqrt{\sigma^2_A + \sigma^2_B}}
\end{equation}

The factor of 2 in the effect size definition ensures it corresponds to the standardised difference between winning and losing distributions, while $D_M$ provides a naturally interpretable measure in the classification context.
For consistency with multivariate extensions, we primarily use $D_M$ in our theoretical development, while acknowledging that in the univariate case, researchers may be more familiar with effect size $d$, which can be obtained through the simple relationship $d = 2D_M$. Effect size functions as a scale-independent measure of competitive advantage. This property enables comparison of performance across different domains or time periods.

\subsection{Metric Relationships and Theoretical Bounds}

The three metrics we've introduced—separability, information content, and effect size—are fundamentally interconnected through the following relationships:

\begin{align}
S &= \Phi(d/2) \\
I &= 1 - H(\Phi(d/2))
\end{align}

This interconnection reveals a key insight: while each metric captures a different aspect of competitive performance, they all ultimately derive from the standardised relative difference between competitors. The effect size $d$ serves as a fundamental parameter that determines both the probability of correct ordering ($S$) and the predictive power of the relative metric ($I$).

\begin{figure}[ht]
\centering
\includegraphics[width=\textwidth]{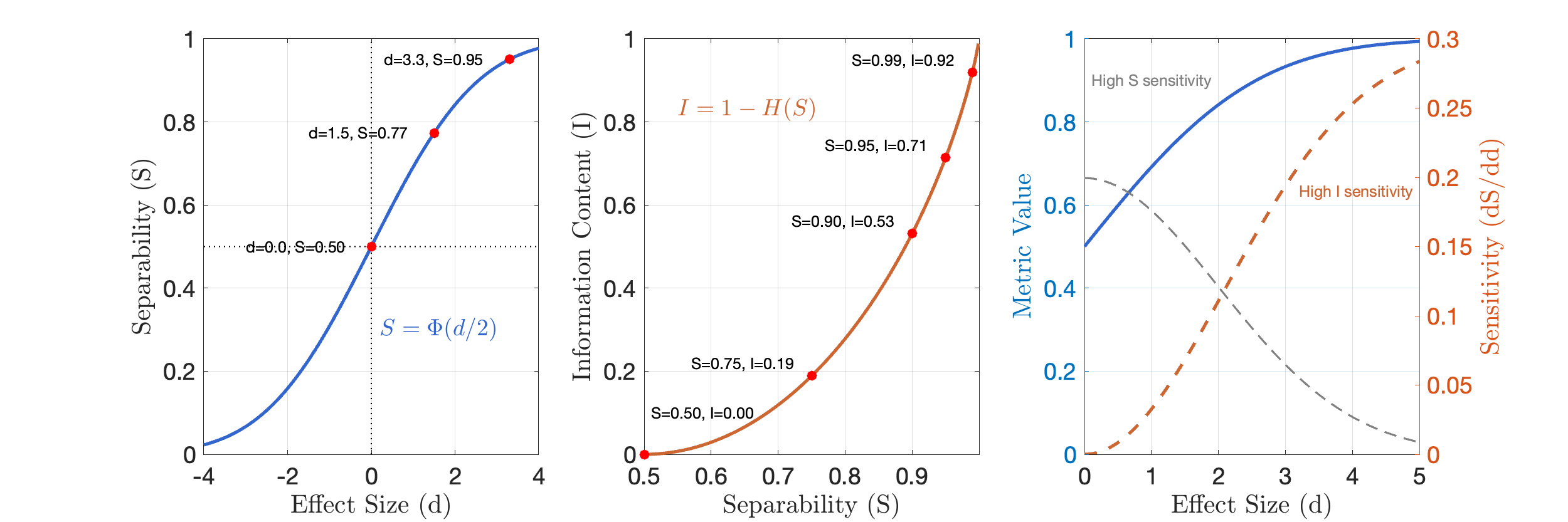}
\caption{Theoretical relationships between performance metrics. (a) Effect size ($d$) vs. Separability ($S$), showing the cumulative normal relationship. (b) Separability ($S$) vs. Information Content ($I$), illustrating how information content emerges from separability through entropy reduction. (c) Effect size ($d$) vs. both metrics with sensitivity curve, demonstrating how separability saturates at high effect sizes while information content continues to increase.}
\label{fig:metric_relationships}
\end{figure}

These relationships have important practical implications:
\begin{itemize}
    \item Improvements in effect size simultaneously enhance both separability and information content
    \item Different metrics provide optimal sensitivity in different performance regimes
    \item There are fundamental limits to prediction accuracy based on measurement precision
    \item Trade-offs between different aspects of performance can be quantified explicitly
\end{itemize}

Table \ref{tab:metric_values} shows the values of these metrics at representative effect sizes:

\begin{table}[ht]
\centering
\caption{Values of metrics at representative effect sizes}
\label{tab:metric_values}
\begin{tabular}{cccc}
\hline
Effect Size (d) & Description & Separability (S) & Information Content (I) \\
\hline
0 & No separation & 0.500 & 0.000 \\
1 & Small separation & 0.691 & 0.089 \\
2 & Moderate separation & 0.841 & 0.269 \\
3 & Large separation & 0.933 & 0.500 \\
4 & Very large separation & 0.977 & 0.729 \\
\hline
\end{tabular}
\end{table}

These values quantify what Figure \ref{fig:metric_relationships} displays graphically: while separability nears its maximum of 1.0 at an effect size of 4 (reaching 0.977), information content continues to show substantial room for improvement (0.729 out of a maximum of 1.0). This differential sensitivity makes these complementary metrics valuable for performance assessment across different regimes.

\subsubsection{Theoretical Bounds}

The interrelationships between metrics establish clear mathematical bounds on their achievable values, defining the theoretical limits of performance measurement in competitive settings.

\begin{theorem}[Separability Bounds]
\label{thm:sep_bounds}
The maximum achievable separability for a given effect size is:
\begin{equation*}
S_{\text{max}} = \Phi\left(\frac{|\mu_A - \mu_B|}{\sqrt{\sigma_A^2 + \sigma_B^2}}\right) = \Phi\left(\frac{d}{2}\right)
\end{equation*}
\end{theorem}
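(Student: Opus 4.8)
The plan is to read separability operationally as the probability of correctly ordering the two competitors under the optimal decision rule, and then to show that this probability is both attained and maximised by the threshold rule applied to the relative measure $R$. First I would fix the sign convention: assuming without loss of generality that $\mu_A > \mu_B$, the natural classifier declares $A$ superior whenever the observed relative measure is positive. Invoking Theorem~\ref{thm:dist_props}, which gives $R \sim \mathcal{N}(\mu_A - \mu_B,\ \sigma_A^2 + \sigma_B^2)$, a direct computation of $P(R > 0)$ through the standard normal CDF yields
\[
P(R > 0) = \Phi\!\left(\frac{\mu_A - \mu_B}{\sqrt{\sigma_A^2 + \sigma_B^2}}\right) = \Phi(D_M) = \Phi\!\left(\frac{d}{2}\right),
\]
using Proposition~\ref{prop:univariate_mahalanobis} and the relation $d = 2 D_M$. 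This establishes that the asserted value is attained by the relative metric.

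The substantive step is to show this value is an upper bound over all admissible decision rules, i.e.\ that no statistic derived from $(X_A, X_B)$ achieves a strictly higher probability of correct ordering. Here I would appeal to the Minimal Sufficiency theorem: since $R = X_A - X_B$ is minimal sufficient for $\mu_A - \mu_B$, the conditional distribution of the data given $R$ is free of the parameter of interest, so any competing classifier factors through $R$ without loss of discriminative information. The ordering problem then reduces to a test on the sign of $\mu_A - \mu_B$, and by the Neyman--Pearson lemma the likelihood-ratio test---which for the Gaussian family depends on the data only through $R$---is optimal among rules based on the sufficient statistic. Consequently the threshold rule on $R$ is optimal and $\Phi(d/2)$ is the maximum achievable separability; this also dovetails with Theorem~\ref{thm:snr_imp}, since separability is a monotone increasing function of $D_M$ and hence of the signal-to-noise ratio that relativisation maximises.

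I expect the main obstacle to lie in making the optimality argument watertight rather than in the attainment computation, which is routine. The subtlety is twofold: (i) specifying precisely the class of competing metrics over which the maximum is claimed, so that the sufficiency reduction genuinely applies; and (ii) handling the symmetry of the two-sided ordering decision, since the Bayes-optimal rule for $\mathrm{sign}(\mu_A - \mu_B)$ must be verified to coincide with the simple threshold $R \gtrless 0$ when the prior is symmetric and the variances are known. I would close these gaps by invoking the monotone-likelihood-ratio property of the shifted Gaussian family in $R$, which guarantees that the threshold rule is admissible and that separability increases monotonically in $D_M$, thereby confirming that $\Phi(d/2)$ cannot be exceeded.
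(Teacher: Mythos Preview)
Your proposal is correct and follows essentially the same route as the paper: compute the attained separability as $P(R>0)$ using the Gaussian law of $R$ from Theorem~\ref{thm:dist_props}, rewrite it as $\Phi(D_M)=\Phi(d/2)$, and justify maximality by appealing to the optimality of the likelihood-ratio decision rule under normality. Your version is considerably more detailed---explicitly invoking minimal sufficiency, Neyman--Pearson, and the monotone-likelihood-ratio property---whereas the paper's proof simply asserts in one sentence that the value ``is derived from the optimal decision rule (likelihood ratio test) for normal distributions''; but the underlying argument is the same.
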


\begin{proof} (continued)
Since $\frac{R - (\mu_A - \mu_B)}{\sqrt{\sigma_A^2 + \sigma_B^2}} \sim \mathcal{N}(0,1)$ under our model assumptions, this simplifies to:
\begin{align*}
\begin{split}
S &= \Phi\left(\frac{\mu_A - \mu_B}{\sqrt{\sigma_A^2 + \sigma_B^2}}\right) \\
&= \Phi\left(\frac{|\mu_A - \mu_B|}{\sqrt{\sigma_A^2 + \sigma_B^2}}\right) \quad \text{(assuming $\mu_A > \mu_B$ without loss of generality)} \\
&= \Phi(D_M) = \Phi\left(\frac{d}{2}\right)
\end{split}
\end{align*}

This represents the theoretical maximum separability achievable for a given effect size, as it is derived from the optimal decision rule (likelihood ratio test) for normal distributions.
\end{proof}

\begin{theorem}[Information Content Bounds]
\label{thm:info_bounds}
The maximum achievable information content for a given effect size is:
\begin{equation*}
I_{\text{max}} = 1 - H\left(\Phi\left(\frac{d}{2}\right)\right) = 1 - H(S_{\text{max}})
\end{equation*}
where $H(p) = -p\log_2(p) - (1-p)\log_2(1-p)$ is the binary entropy function.
\end{theorem}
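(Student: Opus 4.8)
The plan is to obtain this bound as an immediate consequence of the definition $I = 1 - H(S)$ together with the separability bound $S_{\text{max}} = \Phi(d/2)$ established in Theorem~\ref{thm:sep_bounds}. The essential observation is that information content is a strictly monotone reparametrisation of separability on the relevant domain, so whatever bound caps $S$ transfers directly to a bound on $I$, and maximising one automatically maximises the other.

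First I would pin down the domain. Since $\Phi$ is increasing and the effect size is non-negative (taking $\mu_A > \mu_B$ without loss of generality, exactly as in the proof of Theorem~\ref{thm:sep_bounds}), we have $S = \Phi(d/2) \geq \Phi(0) = 1/2$, so separability always lies in $[1/2, 1]$. On this interval the behaviour of the binary entropy function $H$ controls everything. I would compute its derivative,
\begin{equation*}
H'(p) = \log_2\!\left(\frac{1-p}{p}\right),
\end{equation*}
and note that for $p \in (1/2, 1)$ the argument of the logarithm is strictly less than one, whence $H'(p) < 0$. This shows $H$ is strictly decreasing on $[1/2, 1]$, attaining its maximum $H(1/2) = 1$ at the lower endpoint and decreasing toward $0$ as $p \to 1$.

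Combining this with the definition $I = 1 - H(S)$, the composite map $S \mapsto 1 - H(S)$ is strictly increasing on $[1/2, 1]$, since subtracting a strictly decreasing function from a constant yields a strictly increasing one. Invoking Theorem~\ref{thm:sep_bounds}, which gives $S \leq S_{\text{max}} = \Phi(d/2)$, monotonicity then propagates the inequality through the reparametrisation:
\begin{equation*}
I = 1 - H(S) \leq 1 - H(S_{\text{max}}) = 1 - H\!\left(\Phi\!\left(\frac{d}{2}\right)\right),
\end{equation*}
with equality exactly when $S = S_{\text{max}}$. This identifies the right-hand side as $I_{\text{max}}$ and closes the argument.

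I do not expect a serious obstacle here, as the statement is essentially a corollary of the separability bound under a monotone transformation. The only point genuinely requiring care is confirming that $S$ never drops below $1/2$; this is what guarantees we remain on the strictly decreasing branch of $H$, so that the monotonicity of $I$ in $S$ holds globally over the achievable range rather than merely locally. Once that is secured, the bound follows purely from transferring Theorem~\ref{thm:sep_bounds} through the increasing map $S \mapsto 1 - H(S)$.
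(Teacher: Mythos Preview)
Your argument is correct, but it takes a different route from the paper's own proof. The paper frames the result information-theoretically: it writes the information content as the mutual information $I(R;\Omega) = H(\Omega) - H(\Omega|R)$, takes $H(\Omega)=1$ for a balanced competition, and then argues that under the optimal decision rule the averaged posterior $P(\Omega=W \mid R=r)$ equals $\Phi(d/2)$, so that the conditional entropy collapses to $H(\Phi(d/2))$. You instead treat the statement as a pure corollary of Theorem~\ref{thm:sep_bounds}: starting from the paper's own definition $I = 1 - H(S)$, you verify that $S \in [1/2,1]$, check via $H'(p)=\log_2((1-p)/p)$ that $H$ is strictly decreasing there, and push the inequality $S \le S_{\max}$ through the increasing map $S \mapsto 1 - H(S)$. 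Your route is shorter and fully rigorous at each step, and it makes explicit the point---only implicit in the paper---that restricting to $S \ge 1/2$ is what keeps $I$ monotone in $S$. The paper's approach, by contrast, tries to supply an information-theoretic justification for why $1 - H(S)$ is the right object in the first place, though its averaging step is stated rather loosely; yours simply takes the definition as given and inherits everything from the separability bound.
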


\begin{proof}
Information content quantifies the reduction in uncertainty about the outcome provided by observing the relative performance. The mutual information between the relative performance $R$ and the outcome $\Omega \in \{W,L\}$ is:
\begin{align*}
I(R;\Omega) &= H(\Omega) - H(\Omega|R)
\end{align*}

For balanced competitions with $P(W) = P(L) = 0.5$, the prior entropy is $H(\Omega) = 1$. The conditional entropy $H(\Omega|R)$ is:
\begin{align*}
\begin{split}
H(\Omega|R) &= \int_{-\infty}^{\infty} H(\Omega|R=r) \cdot p(r) \, dr \\
&= \int_{-\infty}^{\infty} H\left(P(\Omega=W|R=r)\right) \cdot p(r) \, dr
\end{split}
\end{align*}

For an optimal decision rule, $P(\Omega=W|R=r) = \Phi\left(\frac{d}{2}\right)$ when averaged across the distribution of $R$. This gives:
\begin{align*}
    \begin{split}
I_{\text{max}} &= 1 - H\left(\Phi\left(\frac{d}{2}\right)\right) \\
&= 1 - H(S_{\text{max}})
\end{split}
\end{align*}

This represents the theoretical maximum information content achievable for a given effect size.
\end{proof}

\begin{theorem}[Effect Size Constraints]
\label{thm:effect_constraints}
The maximum achievable effect size is constrained by the ratio of true performance difference to measurement noise:
\begin{equation*}
d_{\text{max}} = \frac{2|\mu_A - \mu_B|}{\sqrt{\sigma_A^2 + \sigma_B^2}}
\end{equation*}
\end{theorem}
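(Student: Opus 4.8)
The plan is to establish that the stated expression is a genuine upper bound by showing that both its numerator (the extractable signal) and its denominator (the residual noise) are already at their optimal values under relativisation, so that no admissible metric can produce a larger effect size.

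First I would recall the definition $d = 2D_M$ together with Proposition~\ref{prop:univariate_mahalanobis}, which gives $D_M = |\mu_A - \mu_B|/\sqrt{\sigma_A^2 + \sigma_B^2}$ and identifies $\sqrt{\sigma_A^2 + \sigma_B^2}$ as the standard deviation of $R$ by Theorem~\ref{thm:dist_props}. This shows immediately that the claimed $d_{\text{max}}$ is precisely the effect size attained by the relative metric $R = X_A - X_B$; the content of the theorem is therefore the assertion that this value cannot be exceeded.

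Next I would argue that the numerator is already maximal: $|\mu_A - \mu_B|$ is the intrinsic performance gap fixed by the measurement model, and by Theorem~\ref{thm:env_cancel} relativisation preserves this signal exactly, so no deterministic transformation of the observations can enlarge it. Then I would establish that $\sigma_A^2 + \sigma_B^2$ is the irreducible noise floor. Once relativisation cancels the shared environmental component $\eta$ (again Theorem~\ref{thm:env_cancel}), only the competitor-specific variances remain. Invoking the Minimal Sufficiency theorem, $R$ retains all information about $\mu_A - \mu_B$, and by the Asymptotic Efficiency theorem its variance attains the Cram\'er--Rao lower bound; hence no estimator of the performance difference can have variance below $\sigma_A^2 + \sigma_B^2$, so the denominator cannot be reduced further. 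Combining these two facts, the ratio is simultaneously maximal in its numerator and minimal in its denominator, yielding the stated $d_{\text{max}}$ as the supremum over admissible metrics.

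The main obstacle will be making precise the class of ``achievable'' metrics and rigorously excluding transformations that might inflate $d$ by exploiting the correlation structure between $X_A$ and $X_B$. The cleanest route is to defer to the Cram\'er--Rao bound through the Asymptotic Efficiency result rather than attempting a direct optimisation over all candidate metrics, since that bound caps attainable precision independently of the particular estimator. The subtlety worth flagging is that the statement reads almost as a restatement of the definition of $d$, so the real expository work lies in articulating \emph{why} relativisation attains this floor while any absolute metric---whose denominator carries the inflated variance $\sqrt{\sigma_A^2 + \sigma_\eta^2}$ rather than $\sqrt{\sigma_A^2 + \sigma_B^2}$---strictly falls short whenever $\sigma_\eta^2 > \sigma_B^2$.
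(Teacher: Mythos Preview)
Your proposal is correct but takes a substantially more rigorous route than the paper. The paper's own proof is essentially definitional: it restates the formula for $d$, observes that for fixed $|\mu_A - \mu_B|$ the effect size is maximised when the denominator is minimised, and then remarks informally that ``in practice, a lower bound on measurement noise exists due to inherent stochasticity in performance and measurement precision limits.'' That is the entire argument---no optimality result is invoked, and no class of competing metrics is considered.

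By contrast, you attempt to make the word ``maximum'' carry genuine content: you identify $d_{\text{max}}$ as the effect size actually attained by $R = X_A - X_B$, then argue via Minimal Sufficiency and the Cram\'er--Rao bound (through the Asymptotic Efficiency theorem) that no admissible estimator of $\mu_A - \mu_B$ can have variance below $\sigma_A^2 + \sigma_B^2$. This buys an honest optimality statement over unbiased estimators, whereas the paper's version is closer to a restatement of the definition with commentary. You correctly anticipate this when you note that ``the statement reads almost as a restatement of the definition of $d$''---that is exactly how the paper treats it. One small caution: Cram\'er--Rao bounds estimator variance, so to close the loop you would want to make explicit that maximising $d$ over admissible metrics reduces to minimising estimator variance at fixed signal; this is implicit in your sketch but worth stating.
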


\begin{proof}
The effect size $d$ is defined as:
\begin{align*}
d &= \frac{2|\mu_A - \mu_B|}{\sqrt{\sigma_A^2 + \sigma_B^2}}
\end{align*}

This expression reveals how measurement uncertainty ($\sigma_A^2 + \sigma_B^2$) fundamentally limits the achievable effect size. For any fixed true performance difference ($|\mu_A - \mu_B|$), the maximum effect size is achieved when measurement noise is minimised. However, in practice, a lower bound on measurement noise exists due to inherent stochasticity in performance and measurement precision limits.

Therefore, $d_{\text{max}}$ represents the theoretical upper limit on effect size given the constraints of the measurement system.
\end{proof}

These three bounds form an interconnected system:
\begin{align}
S_{\text{max}} &= \Phi\left(\frac{d_{\text{max}}}{2}\right) \\
I_{\text{max}} &= 1 - H(S_{\text{max}})
\end{align}

These bounds have important implications for measurement system design. The sensitivity curve peaks at moderate effect sizes (around $d=1$), indicating that improvements in the measurement system will yield the largest gains in separability in this region. However, for high-performing competitors with large effect sizes, information content remains sensitive to improvements even when separability approaches its upper bound.

These bounds serve several practical purposes:
\begin{itemize}
    \item They establish quantitative targets for measurement system optimisation
    \item They enable objective comparison between different measurement approaches
    \item They help identify the fundamental limits of performance discrimination in specific domains
    \item They guide resource allocation decisions by quantifying the returns on improved measurement precision
\end{itemize}

In the following sections, we will demonstrate how these theoretical bounds manifest in practical settings and validate their predictions through simulations and real-world data analysis. These theoretical bounds, combined with the SNR improvement established in Section 2.4, determine the maximum performance gains achievable through relativization, which we analyse in detail in the following section.

\subsection{SNR Impact on Performance Metrics}
Building on the SNR improvement established in Section 2.4, we now quantify how this enhancement directly translates to gains across each of our performance metrics. This connection provides a unified framework for understanding how relativisation benefits manifest in practical performance evaluation.

\subsubsection{SNR Translation to Metric Improvements} 
The signal-to-noise ratio improvement from relativisation directly impacts each metric through the following mathematical relationships:

Effect Size (d):
The effect size is proportionally related to the square root of the SNR:
\begin{equation}
d = 2\sqrt{\text{SNR}} = \frac{2|\mu_A - \mu_B|}{\sqrt{\sigma_A^2 + \sigma_B^2}}
\end{equation}
Therefore, the improvement in effect size due to relativisation is:
\begin{equation}
\frac{d_{\text{rel}}}{d_{\text{abs}}} = \sqrt{\frac{\text{SNR}{\text{rel}}}{\text{SNR}{\text{abs}}}} = \sqrt{1 + \frac{\sigma_\eta^2}{\sigma_A^2 + \sigma_B^2}}
\end{equation}
Separability ($S$):
The improvement in separability follows from the cumulative normal relationship with effect size:
\begin{equation}
S_{\text{rel}} = \Phi\left(\frac{d_{\text{rel}}}{2}\right) = \Phi\left(\frac{|\mu_A - \mu_B|}{\sqrt{\sigma_A^2 + \sigma_B^2}}\right)
\end{equation}
While the absolute separability is:
\begin{equation}
S_{\text{abs}} = \Phi\left(\frac{d_{\text{abs}}}{2}\right) = \Phi\left(\frac{|\mu_A - \mu_B|}{\sqrt{\sigma_A^2 + \sigma_\eta^2}}\right)
\end{equation}
The monotonicity of $\Phi$ ensures that $S_{\text{rel}} > S_{\text{abs}}$ whenever $\sigma_{\eta}^2> 0$.

Information Content ($I$):
The information content improvement follows from its relationship to separability:
\begin{equation}
I_{\text{rel}} = 1 - H(S_{\text{rel}}) > 1 - H(S_{\text{abs}}) = I_{\text{abs}}
\end{equation}
Since the binary entropy function $H(p)$ decreases as $p$ moves away from 0.5, and $ S_{\text{rel}} > S_{\text{abs}} $ when $\sigma_\eta^2 > 0$, we have $I_{\text{rel}} > I_{\text{abs}}$.

\subsubsection{Quantitative Improvement Analysis}
The magnitude of improvement for each metric varies across the parameter space and exhibits different sensitivity profiles:

\paragraph{Linear SNR Regime:} When environmental noise dominates ($\sigma_\eta^2 \gg \sigma_A^2 + \sigma_B^2$), the SNR improvement approximates:

\begin{equation}
\frac{\text{SNR}{\text{rel}}}{\text{SNR}{\text{abs}}} \approx \frac{\sigma_\eta^2}{\sigma_A^2 + \sigma_B^2}
\end{equation}
This creates a proportional improvement in squared effect size.
\paragraph{Non-Linear Metric Improvements:} The separability and information content improvements follow non-linear patterns:

\begin{itemize}
    \item At low effect sizes ($d<1$), separability improvement scales approximately linearly with SNR improvement
    \item At moderate effect sizes ($1 < d < 3$), separability improvement follows a sublinear relationship with SNR improvement
    \item At high effect sizes ($d>3$), separability improvements diminish as they approach the upper bound of 1.0
    \item Information content continues to show substantial improvements even at high effect sizes where separability improvements plateau
\end{itemize}

\subsubsection{Practical Interpretation}
These mathematical relationships provide actionable insights for performance measurement system design:

\paragraph{Environmental Noise Assessment:} The ratio $\sigma_\eta^2/(\sigma_A^2 + \sigma_B^2$) serves as a key diagnostic for estimating potential relativization benefits. Measurement systems with high environmental noise stand to gain the most.

\paragraph{Metric Selection Guidance:} Different metrics provide optimal sensitivity in different regimes:

\begin{itemize}
    \item For systems with small effect sizes ($d<1$), focus on separability improvements
    \item For systems with large effect sizes ($d>3$), focus on information content to capture ongoing improvements
    \item Effect size provides the most consistent linear relationship with SNR improvement across all regimes
\end{itemize}

\paragraph{Diminishing Returns:} There are fundamental limits to the improvements achievable through relativisation alone. When $S_{\text{rel}}$ approaches 1.0, additional SNR improvement yields minimal separability gains, though information content may continue to improve.

This unified analysis demonstrates how the fundamental SNR improvement established in Section 2.4 manifests across multiple performance metrics, providing a comprehensive understanding of relativization benefits. These theoretical relationships will be empirically validated in Section 5 through both simulation studies and real-world rugby performance data.

\newpage

\section{Experimental Methodology}

\subsection{Simulation Design}

To evaluate our theoretical framework, we implement a comprehensive simulation approach that tests relative performance metrics across diverse parameter configurations. Our simulations model the univariate performance of two competing entities according to the measurement model described in Section 2:

\begin{align*}
X_A &= \mu_A + \epsilon_A + \eta \\
X_B &= \mu_B + \epsilon_B + \eta
\end{align*}

For each simulation, we generate random samples of $\epsilon_A \sim \mathcal{N}(0, \sigma_A^2)$, $\epsilon_B \sim \mathcal{N}(0, \sigma_B^2)$, and $\eta \sim \mathcal{N}(0, \sigma_\eta^2)$ according to the specified parameter values.

\subsection{Predictive Models}

We evaluate three distinct prediction approaches, each representing a different approach to performance comparison:

\begin{enumerate}
    \item \textbf{Single-Feature Absolute Predictor (SA)}: Uses $X_A$ or $X_B$ in isolation for each observation, mirroring real-world scenarios where absolute metrics are evaluated without context.
    
    \item \textbf{Two-Feature Absolute Predictor (TA)}: Uses both $X_A$ and $X_B$ as separate features, representing scenarios where both entities' absolute metrics are available simultaneously.
    
    \item \textbf{Relative Predictor (R)}: Uses the difference $R = X_A - X_B$, directly implementing the relativisation principle.
\end{enumerate}

For classification, we employ linear support vector machines (SVMs) with consistent hyperparameters across all models. This ensures fair comparison of the inherent predictive capabilities of each approach rather than differences in model optimisation. 

\subsection{Parameter Configurations}

We explore the parameter space systematically, with particular focus on:

\begin{enumerate}
    \item \textbf{Performance Difference} ($|\mu_A - \mu_B|$): Ranging from 0 to 20 units
    \item \textbf{Individual Variation} ($\sigma_A, \sigma_B$): Ranging from 1 to 10 units
    \item \textbf{Environmental Noise} ($\sigma_\eta$): Ranging from 0 to 100 units
\end{enumerate}

Two specific configurations receive particular attention:

\begin{table}[h]
\centering
\caption{Key Parameter Configurations}
\begin{tabular}{lcc}
\hline
Parameter & High-Noise Configuration & Boundary Configuration \\
\hline
True Performance $\mu_A$ & 1010 & 1000 \\
True Performance $\mu_B$ & 1013 & 1010 \\
Competitor Variation $\sigma_A, \sigma_B$ & 3 & 3 \\
Environmental Noise $\sigma_\eta$ & 100 & 0.1 \\
Performance Difference $|\mu_A - \mu_B|$ & 3 & 10 \\
\hline
\end{tabular}
\end{table}

The high-noise configuration tests the conditions specified in our theoretical framework where relativisation should provide maximum benefit. The boundary configuration deliberately violates these conditions to explore the limits of relativisation advantages.

\subsection{Performance Evaluation}

For each parameter configuration and prediction approach, we conduct 1000 independent trials, each with:

\begin{itemize}
    \item Training set: 2000 observations
    \item Testing set: 1000 observations
    \item Fixed random seed for reproducibility
\end{itemize}

We evaluate performance using two complementary metrics:

\begin{enumerate}
    \item \textbf{Classification Accuracy}: Proportion of correct predictions, providing a direct measure of practical performance
    \item \textbf{Area Under ROC Curve (AUC-ROC)}: Measures discriminative capability regardless of threshold, providing insight into ranking performance
\end{enumerate}

For each metric, we report both mean values and standard deviations across the 1000 trials, enabling statistical comparison between approaches.

\subsection{Implementation Details}

All simulations were implemented in MATLAB R2023a using custom functions for data generation, model training, and performance evaluation. The implementation adopts a modular design with separate functions for:

\begin{itemize}
    \item Data generation according to the measurement model
    \item Parameter sweep across the specified ranges
    \item Model training and evaluation
    \item Visualisation of results
\end{itemize}

The complete implementation is available in the accompanying code repository.

\section{Univariate Analysis}\label{sec:results}

We begin our experimental validation with the univariate case, establishing fundamental performance characteristics of relative versus absolute metrics. This baseline analysis provides empirical evidence for our theoretical framework while revealing the precise conditions under which relativisation offers advantages.

\subsection{Experimental Design}

To evaluate the univariate case of our framework, we implement the measurement model described in Section 2, where the observed performance of two competing entities A and B is modelled as:

\begin{align*}
X_A &= \mu_A + \epsilon_A + \eta \\
X_B &= \mu_B + \epsilon_B + \eta
\end{align*}

Where $\mu_A$ and $\mu_B$ represent true performance levels, $\epsilon_A$ and $\epsilon_B$ represent entity-specific variations (with $\epsilon_A \sim \mathcal{N}(0, \sigma_A^2)$ and $\epsilon_B \sim \mathcal{N}(0, \sigma_B^2)$), and $\eta \sim \mathcal{N}(0, \sigma_\eta^2)$ represents shared environmental factors affecting both entities equally. This univariate implementation allows us to directly test the environmental noise cancellation principle described in Theorem \ref{thm:env_cancel}, and to validate the conditions under which relative metrics outperform absolute ones as stated in Theorem \ref{thm:relative_superiority}. Figure \ref{fig:model_concept} illustrates this concept, showing how environmental noise affects absolute measurements but cancels out in the relative difference.

\begin{figure}[htbp]
\centering
\includegraphics[width=1\textwidth]{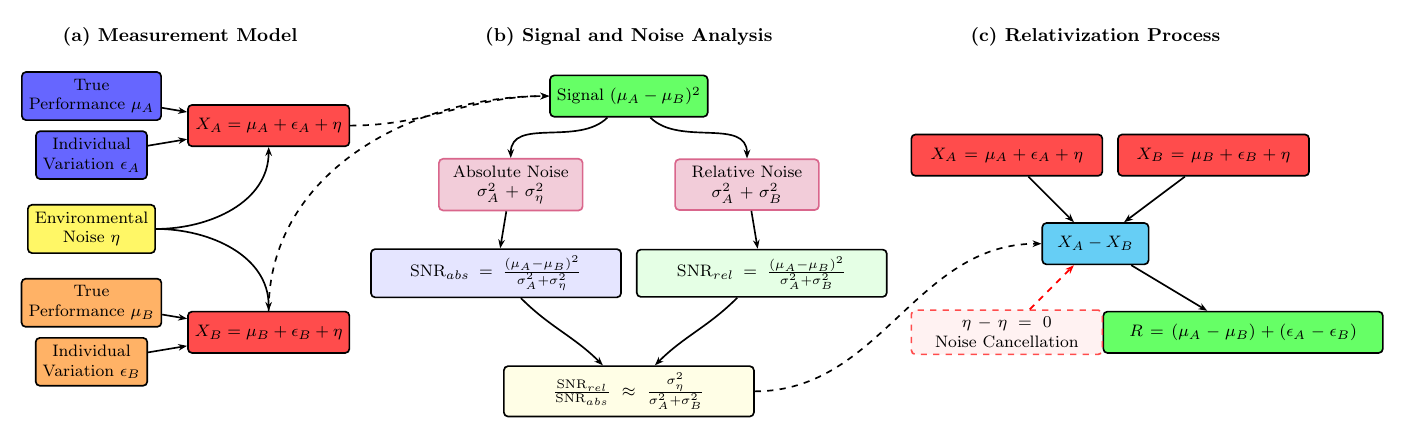}
\caption{Conceptual framework for relativisation. (a) The measurement model illustrates how observed performance variables ($X_A$, $X_B$) are composed of true performance ($\mu_A$, $\mu_B$), individual variation ($\epsilon_A$, $\epsilon_B$), and shared environmental noise ($\eta$). (b) Signal and noise analysis demonstrates how the signal-to-noise ratio differs between absolute and relative measurements, with the potential for substantial improvement when environmental noise dominates. (c) The relativisation process shows how subtracting measurements leads to automatic cancellation of shared environmental noise, leaving only the true performance difference and combined individual variations.}
\label{fig:model_concept}
\end{figure}

To test the conditions specified in Theorem \ref{thm:relative_superiority}, we evaluated three distinct prediction approaches:

\begin{enumerate}
    \item \textbf{Single-Feature Absolute Predictor (SA)}: Uses $X_A$ or $X_B$ or a random selection of both in isolation for each observation, mirroring real-world scenarios where absolute metrics are evaluated without context. Sample sizes (train/test): $(2000/1000)$.
    
    \item \textbf{Two-Feature Absolute Predictor (TA)}: Uses both $X_A$ and $X_B$ as separate features, representing scenarios where both entities' absolute metrics are available simultaneously. Sample sizes (train/test): $\boldsymbol{2\times}(2000/1000)$. TA operates in a two-dimensional feature space with both measurements as distinct inputs, allowing it to potentially model their relationship implicitly.
    
    \item \textbf{Relative Predictor (R)}: Uses the difference $R = X_A - X_B$, directly implementing the relativisation principle. Sample sizes (train/test): $(2000/1000)$.
\end{enumerate}

This design allows us to distinguish between the benefit of comparison itself and the specific advantage of explicit relativisation in handling environmental noise.

Table \ref{tab:univariate_params} details the parameter configurations used in our experiments. The high-noise configuration satisfies all three conditions of Theorem \ref{thm:relative_superiority}, while the boundary configuration deliberately violates these conditions to test the limits of our framework.

\begin{table}[htbp]
\centering
\caption{Univariate Analysis Parameter Configurations}
\label{tab:univariate_params}
\begin{tabular}{lcc}
\hline
Parameter & High-Noise Configuration & Boundary Configuration \\
\hline
Sample Sizes (train/test) & 2000/1000 & 2000/1000 \\
True Performance $\mu_A$ & 1010 & 1000 \\
True Performance $\mu_B$ & 1013 & 1010 \\
Competitor Variation $\sigma_A, \sigma_B$ & 3 & 3 \\
Environmental Noise $\sigma_\eta$ & 100 & 0.1 \\
Upset Probability & 0.05 & 0 \\
Performance Difference $|\mu_A - \mu_B|$ & 3 & 10 \\
\hline
\end{tabular}
\end{table}

For each configuration and prediction approach, we conduct 1000 independent trials and report accuracy and AUC-ROC metrics.

\subsection{Performance Under High Environmental Noise}

When environmental noise dominates individual variation ($\sigma_\eta \gg \sigma_A, \sigma_B$) and the true performance difference is relatively small, our theoretical framework predicts substantial advantages for relativisation over single-feature absolute metrics, but indicates a similar performance to two-feature absolute metrics. Table \ref{tab:high_noise_results} presents the results across all three prediction approaches.

\begin{table}[h]
\centering
\caption{Performance Under High Environmental Noise (1000 Trials)}
\label{tab:high_noise_results}
\begin{tabular}{lccc}
\hline
Metric & Single-Feature Absolute & Two-Feature Absolute & Relative \\
\hline
Accuracy & $0.735 \pm 0.014$ & $0.941 \pm 0.009$ & $0.943 \pm 0.009$ \\
AUC-ROC & $0.498 \pm 0.02$ & $0.920 \pm 0.013$ & $0.920 \pm 0.013$ \\
\hline
\end{tabular}
\end{table}

Several critical observations emerge:

\begin{enumerate}
\item The single-feature absolute predictor performs essentially at chance level (AUC $\approx 0.5$), demonstrating how environmental noise completely obscures the signal when absolute measurements are viewed in isolation.

\item The two-feature absolute predictor performs nearly identically to the relative predictor, achieving approximately 94.1\% accuracy and 0.920 AUC-ROC. This suggests that with access to both measurements simultaneously, the model can implicitly learn to perform noise cancellation.

\item The relative predictor shows only marginal improvement over the two-feature approach, with the relative method outperforming in only 51\% of trials for accuracy and 61\% for AUC-ROC.
\end{enumerate}

Figure \ref{fig:roc_curves} displays the ROC curves for all three models, along with a visualisation of the decision boundaries in the $(X_A, X_B)$ feature space, illustrating their divergent discriminative capabilities under high environmental noise conditions.

\begin{figure}[h]
\centering
\includegraphics[width=1.0\textwidth]{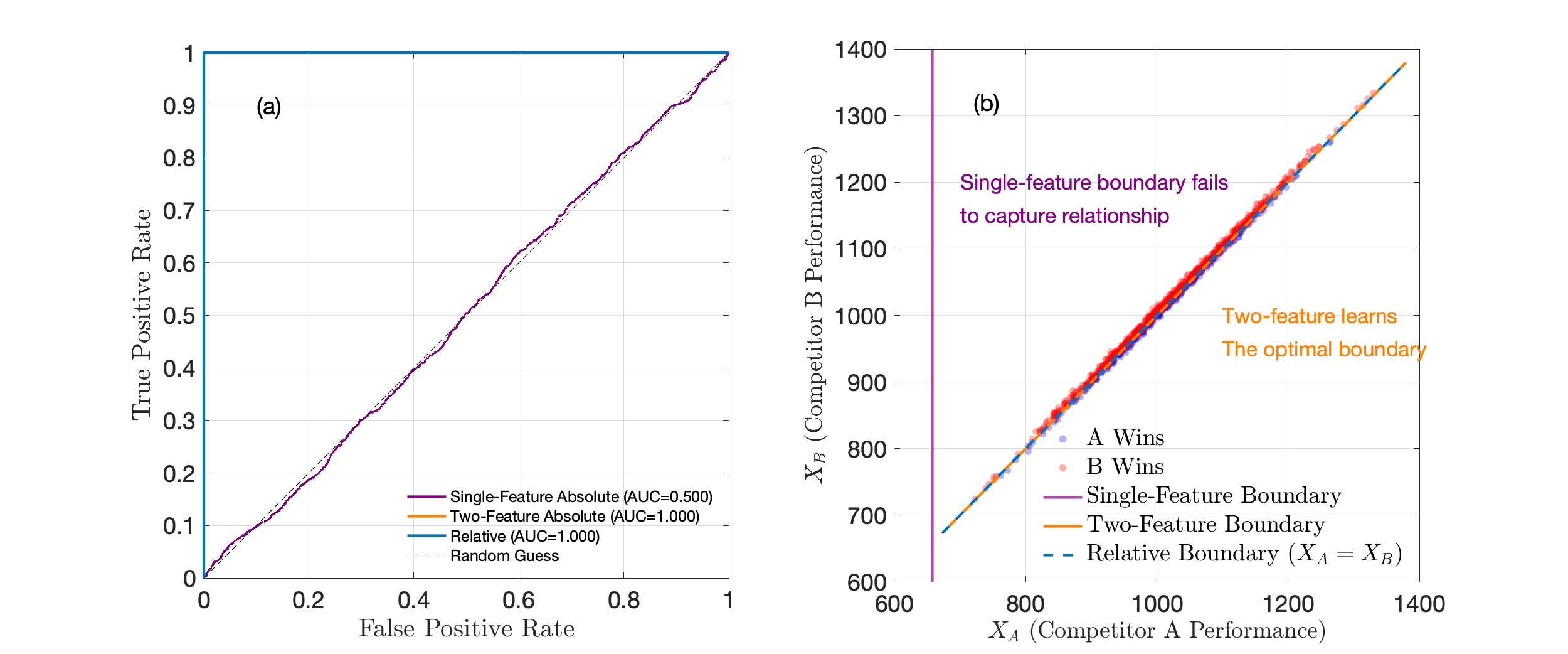}
\caption{Comparison of ROC curves and decision boundaries for absolute and relative predictors. (a) Receiver Operating Characteristic (ROC) curves for three different prediction models: a single-feature absolute predictor (purple), a two-feature absolute predictor (orange), and the relative predictor (blue). The relative predictor achieves the highest Area Under the Curve (AUC), indicating superior discriminative performance. (b) Decision boundaries in the feature space of $X_A$ (Team A performance) and $X_B$ (Team B performance). The scatter plot shows simulated data points for Team A wins (blue) and Team B wins (red). The decision boundaries illustrate how each model separates the two classes: the single-feature model uses a vertical line, the two-feature model uses a diagonal line, and the relative model uses the $X_A = X_B$ (or $X_A - X_B = 0$) line.}
\label{fig:roc_curves}
\end{figure}

The observed performance patterns are explained by the theoretical SNR analysis:

\begin{align*}
\text{SNR}_{\text{single-abs}} &= \frac{(\mu_A - \mu_B)^2}{\sigma_A^2 + \sigma_\eta^2} = \frac{3^2}{3^2 + 100^2} \approx 0.0009 \\
\text{SNR}_{\text{two-abs}} &\approx \frac{(\mu_A - \mu_B)^2}{(\sigma_A^2 + \sigma_\eta^2) + (\sigma_B^2 + \sigma_\eta^2) - 2\text{Cov}(X_A, X_B)} \\
&= \frac{9}{2\sigma_\eta^2 + \sigma_A^2 + \sigma_B^2 - 2\sigma_\eta^2} = \frac{9}{\sigma_A^2 + \sigma_B^2} \approx 0.5 \\
\text{SNR}_{\text{rel}} &= \frac{(\mu_A - \mu_B)^2}{\sigma_A^2 + \sigma_B^2} = \frac{3^2}{3^2 + 3^2} = 0.5
\end{align*}

Interestingly, the theoretical SNR for the two-feature absolute predictor approaches that of the relative predictor as $\text{Cov}(X_A, X_B) \approx \sigma_\eta^2$. This covariance condition arises precisely because the environmental noise $\eta$ is shared between both measurements, creating a structured correlation pattern in the bivariate space that the model can exploit.

This theoretical equivalence manifests in our empirical results, with the two-feature absolute model performing equivalently to the relative model despite their fundamentally different approaches. Let us explore the underlying mechanisms more rigorously:

For the two-feature absolute predictor, the optimal decision boundary in the $(X_A, X_B)$ plane is derived from the likelihood ratio:
\begin{align}
\frac{p(X_A, X_B | W)}{p(X_A, X_B | L)} > 1
\end{align}

Under our model assumptions with equal noise variance, this boundary simplifies to:
\begin{align}
(X_A - X_B) > 0
\end{align}

Therefore, the optimal decision boundary for the bivariate model becomes a hyperplane with normal vector $(1, -1)$ - precisely implementing the relativisation operation implicitly. The bivariate predictor learns to assign opposite-signed but equal-magnitude weights to the two features, effectively reconstructing the relative transformation within its parameter space. 

Crucially, this is not a simplification but a mathematical necessity - the optimal classifier in the bivariate space will converge to parameters that implicitly compute the difference between inputs when the covariance structure is dominated by shared environmental noise. However, this requires sufficient training data and may be less robust to distribution shifts compared to the explicitly relativised predictor which builds the optimal structure into its design.

\subsection{Performance Under Boundary Conditions}

To understand the boundaries of relativisation's benefits, we next examined conditions where the theoretical advantages should diminish. We tested the boundary configuration where Theorem \ref{thm:relative_superiority} conditions are deliberately violated: minimal environmental noise ($\sigma_\eta \ll \sigma_A, \sigma_B$) and large performance separation ($|\mu_A - \mu_B| \gg \sigma_\eta$). Table \ref{tab:boundary_results} presents these results.

\begin{table}[h]
\centering
\caption{Performance Under Boundary Conditions (1000 Trials)}
\label{tab:boundary_results}
\begin{tabular}{lccc}
\hline
Metric & Single-Feature Absolute & Two-Feature Absolute & Relative \\
\hline
Accuracy & $0.991 \pm 0.003$ & $0.999 \pm 0.001$ & $1.000 \pm 0.001$ \\
AUC-ROC & $0.491 \pm 0.071$ & $1.000 \pm 0.000$ & $1.000 \pm 0.000$ \\
\hline
\end{tabular}
\end{table}

These results reveal three crucial insights:

\begin{enumerate}
    \item The single-feature absolute predictor achieves high accuracy (99.1\%) but near-random AUC (0.491), demonstrating that while it can predict the dominant class accurately, it cannot meaningfully rank observations without context.
    
    \item The two-feature absolute predictor achieves essentially perfect performance (0.999 accuracy, 1.000 AUC), matching the relative predictor when environmental noise is minimal.
    
    \item The relative advantage diminishes under boundary conditions, with the relative predictor outperforming in only 18-20\% of trials, aligning with the theoretical SNR improvement of 0.50-fold (suggesting relative might actually perform worse).
\end{enumerate}

\subsection{Parameter Landscape Exploration}

Having examined two extreme cases, we conducted a systematic exploration of the entire parameter space to map where relativisation offers the greatest benefits. We explored three key dimensions: performance difference ($|\mu_A - \mu_B|$), individual variation ($\sigma_{\text{indiv}}$ where $\sigma_A = \sigma_B = \sigma_{\text{indiv}}$), and environmental noise ($\sigma_\eta$).

Figure~\ref{fig:accuracy_landscape} reveals several important patterns in classification accuracy improvement:

\begin{enumerate}
\item \textbf{Environmental Noise Effect}: As environmental noise ($\sigma_\eta$) increases from 10 to 100, the magnitude of improvement from relativisation increases substantially across most of the parameter space, aligning with the relationship predicted by Theorem~\ref{thm:snr_imp}.

\item \textbf{Performance Difference Sensitivity}: The largest relative improvements occur when performance differences ($|\mu_A - \mu_B|$) are moderate (5-20 units). When differences are very small, both methods perform poorly; when very large, both perform well.

\item \textbf{Individual Variation Trade-off}: The optimal individual variation ($\sigma_{\text{indiv}}$) is moderate (2-5 units) - neither too small (which would make both methods perform well) nor too large (which would add excessive noise to the relative metric).
\end{enumerate}

For our specific parameter configuration where $\sigma_A = \sigma_B = \sigma_{\text{indiv}}$, Theorem~\ref{thm:snr_imp} simplifies to:

\begin{equation}
\frac{\text{SNR}_{\text{rel}}}{\text{SNR}_{\text{abs}}} = 1 + \frac{2\sigma_\eta^2}{2\sigma_{\text{indiv}}^2} = 1 + \frac{\sigma_\eta^2}{\sigma_{\text{indiv}}^2} = \frac{\sigma_{\text{indiv}}^2 + \sigma_\eta^2}{\sigma_{\text{indiv}}^2}
\end{equation}

\begin{figure}[h!]
\centering
\includegraphics[width=\textwidth]{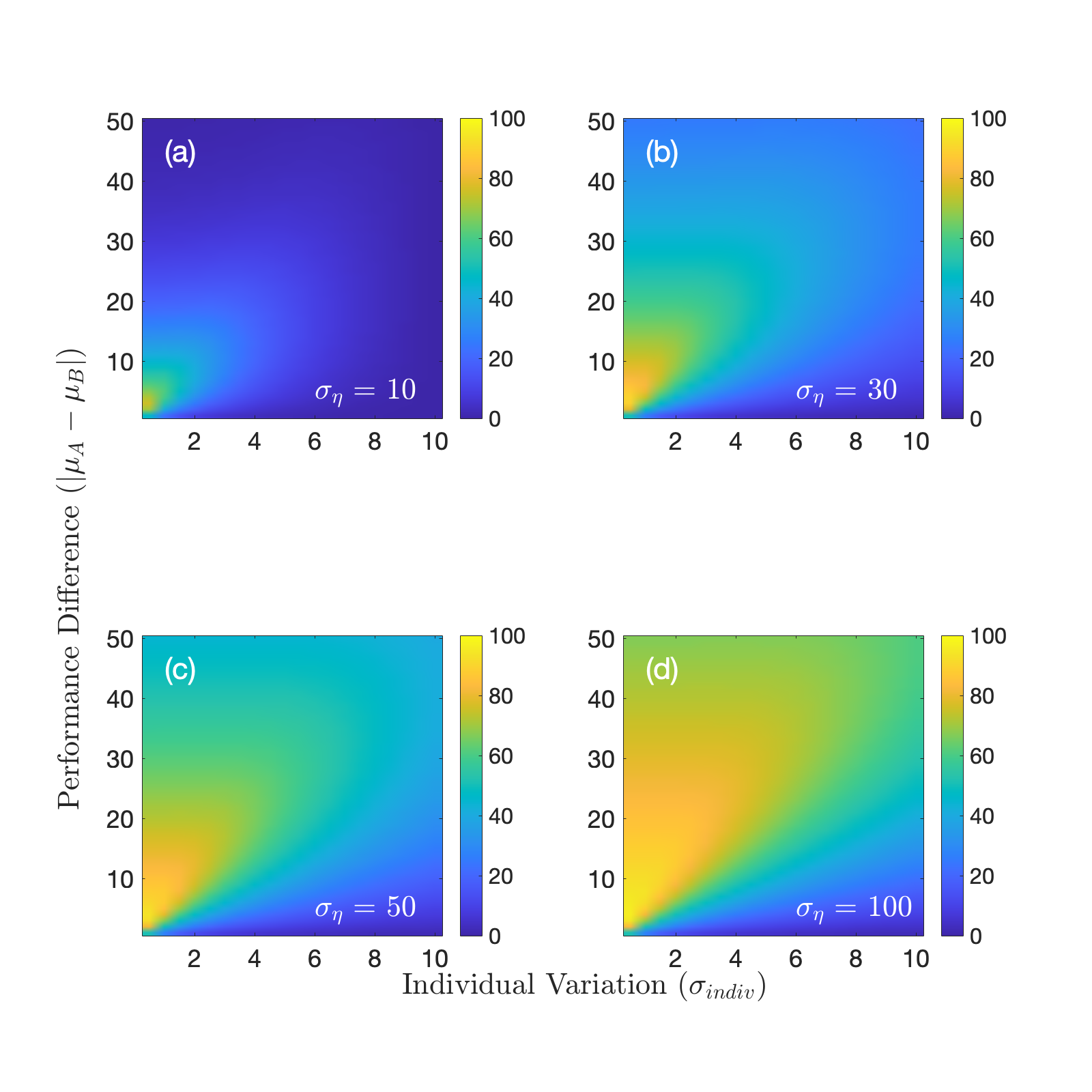}
\caption{Parameter landscape showing the benefits of relativisation as a function of environmental noise, individual variation, and true performance differences. The figure displays how the improvement in accuracy varies across different levels of environmental noise ($\sigma_{\eta}$), individual variation ($\sigma_{indiv}$), and true performance difference ($\mu_A - \mu_B$). Each subplot represents a different level of environmental noise.  The colour intensity in each subplot shows the percentage improvement in accuracy when using the relative measure instead of the absolute measure.  Improvements are most significant when individual variation is low and the performance difference is large.}
\label{fig:accuracy_landscape}
\end{figure}

This formulation clearly shows that the SNR improvement depends solely on the ratio of environmental noise to individual variation. When $\sigma_{\text{indiv}}$ is held constant, the improvement increases linearly with $\sigma_\eta^2$, explaining the patterns observed in Figure~\ref{fig:accuracy_landscape}. Figure~\ref{fig:snr_ratio} (a) illustrates this relationship, showing how improvement scales with the noise ratio $\sigma_\eta/\sigma_{\text{indiv}}$. Additionally, the corresponding Figure~\ref{fig:snr_ratio} (b) displays how our three parameters scale with $\sigma_\eta/\sigma_{\text{indiv}}$.

\begin{figure}[h]
\centering
\includegraphics[width=\textwidth]{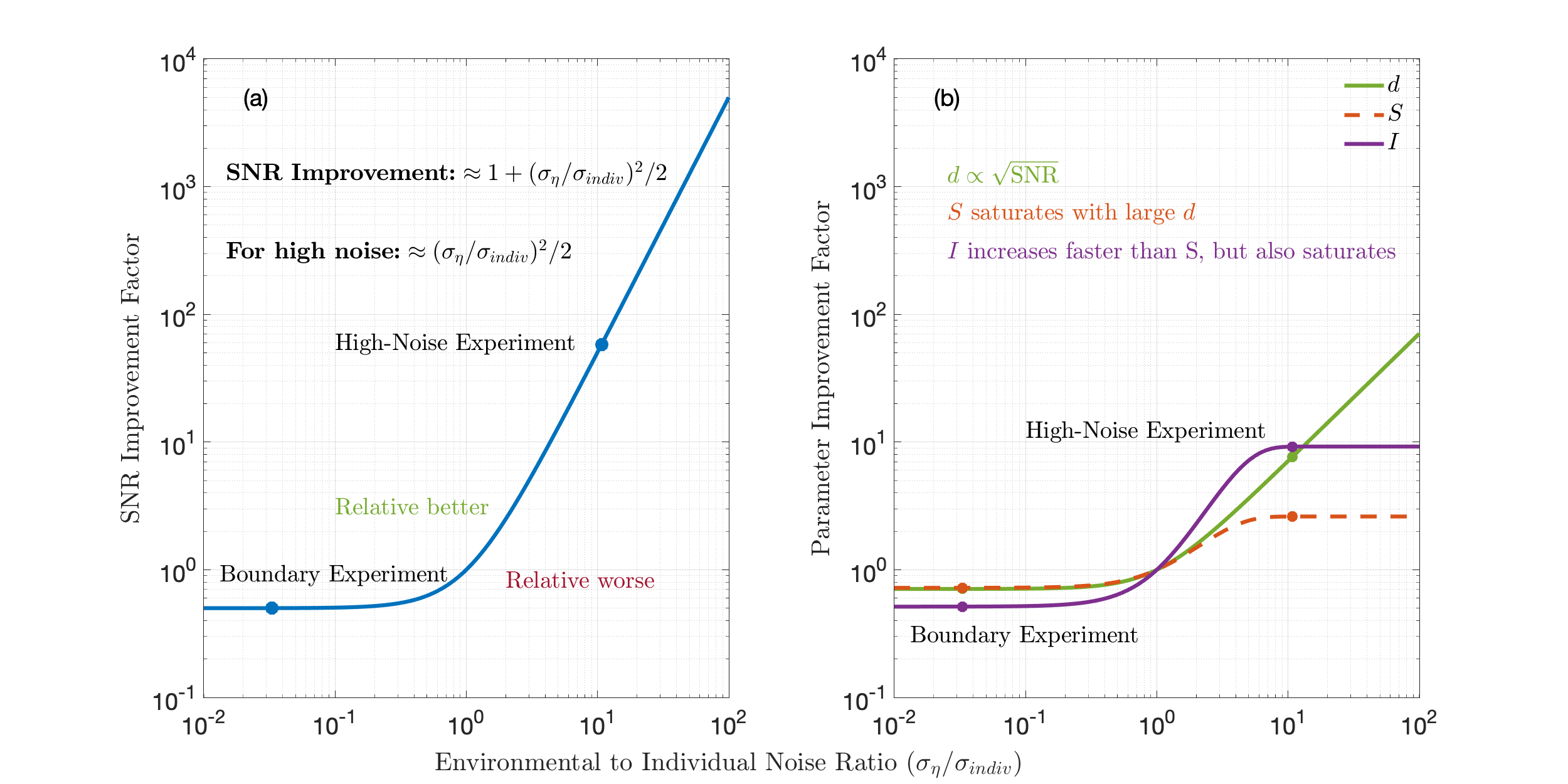}
\caption{Impact of noise relativization on signal-to-noise ratio (SNR) and related metrics. (a) Theoretical SNR improvement factor as a function of the environmental-to-individual noise ratio ($\sigma_{\eta}/\sigma_{indiv}$). The plot illustrates that the benefit of noise relativization, quantified as the improvement in SNR, increases quadratically with the noise ratio.  The red circle highlights the "High-Noise Experiment" scenario ($\sigma_{\eta}/\sigma_{indiv} = 10$), while the blue circle indicates the "Boundary Experiment" ($\sigma_{\eta}/\sigma_{indiv} = 0.033$). (b) Improvement factors for effect size ($d$), separability (S), and information content ($I$) as a function of the environmental-to-individual noise ratio ($\sigma_{\eta}/\sigma_{indiv}$).  Improvements are relative to the non-relativized case.  The plot shows that effect size improvement is proportional to the square root of the SNR improvement ($d \propto \sqrt{SNR}$). Separability (S) exhibits diminishing returns, saturating at high noise ratios, indicating that beyond a certain point, increases in effect size yield minimal gains in separability. Information content ($I$) increases more rapidly than separability but also shows signs of saturation at high noise ratios.  The red and blue circles highlight the "High-Noise Experiment" and the "Boundary Experiment" scenarios, respectively, for each metric.}
\label{fig:snr_ratio}
\end{figure}

\subsection{Visual Demonstration of Metric Impact}

\begin{figure}[h!]
\centering
\includegraphics[width=\textwidth]{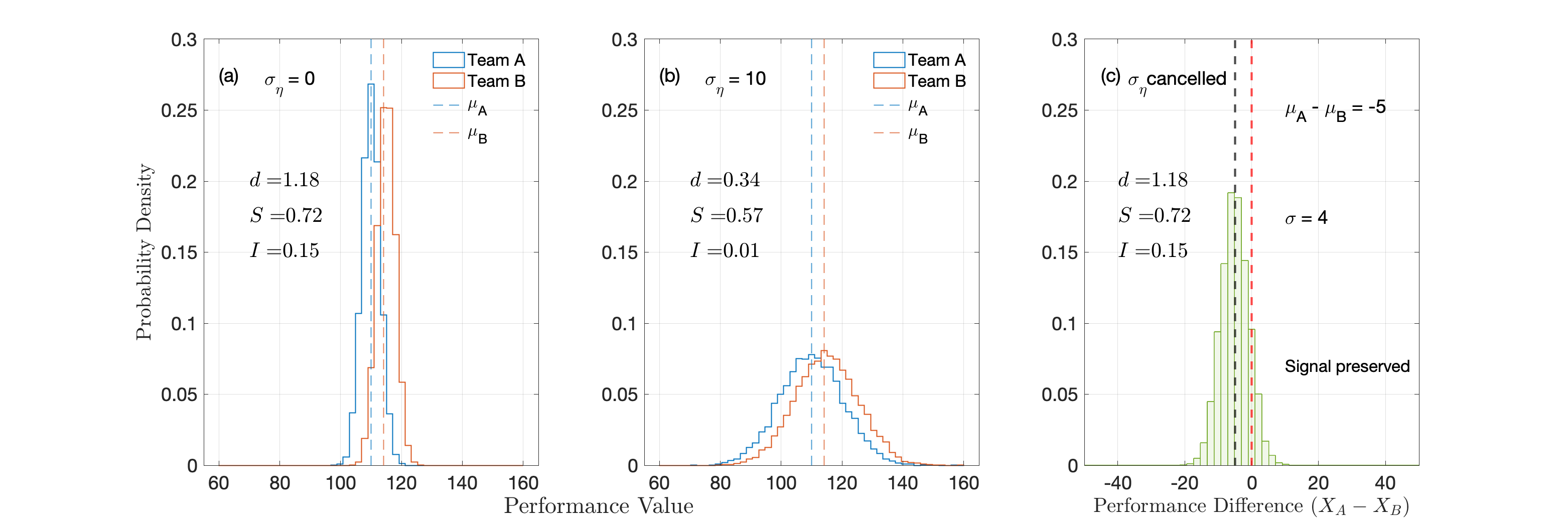}
\caption{Visualization of noise cancellation through relativisation. (a) Probability density distributions of the performance metric for Team A ($X_A$) and Team B ($X_B$) in the absence of environmental noise ($\sigma_{\eta} = 0$). The means of the distributions, $\mu_A$ and $\mu_B$, are indicated by the vertical lines. The theoretical values of our parameters are  $S = 0.82$, $I = 0.31$, $d = 1.18$. (b) Probability density distributions of the performance metric for Team A ($X_A$) and Team B ($X_B$) in the presence of environmental noise ($\sigma_{\eta} = 10$). The increased spread of the distributions due to the added noise obscures the difference between the means.  The theoretical  $d$, $S$, and $I$ values are $S = 0.65$, $I = 0.09$, $d = 0.38$) respectively. (c) Probability density distribution of the relative performance difference ($X_A - X_B$). The environmental noise is cancelled out in the subtraction, resulting in a clearer separation of the distributions. The vertical dashed lines indicate the mean difference ($\mu_A - \mu_B$) and the decision boundary (R = 0).  The theoretical  $d$, $S$, and $I$ values return to those before environmental noise is imposed: $S = 0.82$, $I = 0.31$, $d = 1.18$}

\label{fig:visual_demo}
\end{figure}

The impact of environmental noise and relativisation on performance distributions is shown visually in Figure~\ref{fig:visual_demo}. Specifically, Figure~\ref{fig:visual_demo}(a) shows the true performance distributions for two entities (A and B) with means $\mu_A = 110$ and $\mu_B = 115$, and individual variations $\sigma_A = \sigma_B = 3$, without any environmental noise. In this ideal scenario, the Mahalanobis distance $D_M = 0.59$ (effect size $d = 1.18$), yielding separability $S = 0.82$ and information content $I = 0.31$.

Figure~\ref{fig:visual_demo}(b) introduces environmental noise ($\sigma_{\eta} = 10$), which dramatically increases the variance of both distributions while preserving their means. This noise induces substantial overlap between the distributions, reducing the Mahalanobis distance to $D_M = 0.19$ (effect size $d = 0.38$), separability to $S = 0.65$, and information content to $I = 0.09$.

Figure~\ref{fig:visual_demo}(c) shows the relative performance distribution (competitor A minus competitor B), demonstrating how relativisation systematically cancels the shared environmental noise while preserving the signal of interest. The resulting distribution has a mean of $\mu_A - \mu_B = -5$ and a standard deviation of approximately $\sqrt{\sigma_A^2 + \sigma_B^2} = 4.24$, matching theoretical predictions. Most importantly, the relativised metric restores the original Mahalanobis distance, separability, and information content to their values in the noise-free scenario.

To provide concrete illustration of how these theoretical metrics manifest in practical scenarios, Figure~\ref{fig:visual_demo} presents a visual demonstration of how environmental noise affects performance distributions and how relativisation restores discriminative power.

This visual demonstration highlights a critical insight: relativisation does not merely produce a cleaner visual representation; it quantifiably recovers the discriminative power lost to environmental noise. The restoration of theoretical metrics ($d$, $S$, and $I$) confirms that relative metrics enable more reliable performance comparison in noisy environments.

\subsection{The Mechanism of Relativisation}

Based on our experimental results, we can now synthesise the key mechanisms by which relativisation provides advantages in competitive settings. Our findings extend the theoretical results from Sections 2 and 3 by empirically demonstrating how these mechanisms operate in practice.

Figure~\ref{fig:visual_demo}(a) shows the true performance distributions for two entities (A and B) with means $\mu_A = 110$ and $\mu_B = 115$, and individual variations $\sigma_A = \sigma_B = 3$, without any environmental noise. In this ideal scenario, the Mahalanobis distance $D_M = 0.59$ (effect size $d = 1.18$), yielding separability $S = 0.82$ and information content $I = 0.31$.

Figure~\ref{fig:visual_demo}(b) introduces environmental noise ($\sigma_{\eta} = 10$), which dramatically increases the variance of both distributions while preserving their means. This noise induces substantial overlap between the distributions, reducing the Mahalanobis distance to $D_M = 0.19$ (effect size $d = 0.38$), separability to $S = 0.65$, and information content to $I = 0.09$.

Figure~\ref{fig:visual_demo}(c) shows the relative performance distribution (entity A minus entity B), demonstrating how relativisation systematically cancels the shared environmental noise while preserving the signal of interest. The resulting distribution has a mean of $\mu_A - \mu_B = -5$ and a standard deviation of approximately $\sqrt{\sigma_A^2 + \sigma_B^2} = 4.24$, matching theoretical predictions. Most importantly, the relativised metric restores the original Mahalanobis distance, separability, and information content to their values in the noise-free scenario.
\paragraph{Two Complementary Mechanisms}

Our experiments reveal that relativisation operates through two distinct mechanisms:

\begin{enumerate}
    \item \textbf{Environmental Noise Cancellation}: As predicted by Theorem \ref{thm:env_cancel}, explicit relativisation ($R = X_A - X_B$) systematically eliminates shared environmental noise. Our empirical results confirm that this cancellation is particularly valuable under the conditions specified in Theorem \ref{thm:relative_superiority}:
    \begin{itemize}
        \item High environmental noise relative to individual variations ($\sigma_\eta \gg \sigma_{\text{indiv}}$)
        \item Similar noise effects on both competitors
        \item Moderate true performance differences
    \end{itemize}

    \item \textbf{Comparative Context Provision}: Beyond noise cancellation, relativisation provides essential comparative context. Interestingly, our two-feature absolute predictor demonstrated that this context can be partially recovered through implicit comparisons when both measurements are simultaneously available.
\end{enumerate}

These mechanisms explain the performance patterns observed across our experimental conditions. Table \ref{tab:mechanism_summary} synthesises these findings, highlighting when each approach performs optimally.

\begin{table}[htbp]
\centering
\caption{Summary of relativisation Mechanism Findings}
\label{tab:mechanism_summary}
\begin{tabular}{lccc}
\hline
Condition & Single-Absolute & Two-Feature Absolute & Relative \\
\hline
High Env. Noise & Poor (chance level) & Good & Best \\
Low Env. Noise & Good accuracy, poor ranking & Excellent & Excellent \\
$\sigma_\eta \gg \sigma_{\text{indiv}}$ & Poor & Good & Best \\
$\sigma_\eta \ll \sigma_{\text{indiv}}$ & Moderate & Excellent & Good \\
Large perf. diff. & Good accuracy, poor ranking & Excellent & Excellent \\
Small perf. diff. & Poor & Moderate & Moderate \\
\hline
\multicolumn{4}{l}{Primary Advantage: Explicit noise cancellation before modelling} \\
\multicolumn{4}{l}{Secondary Advantages: Dimensionality reduction (one feature vs. two)} \\
\multicolumn{4}{l}{\quad \quad \quad \quad \quad \quad \quad Built-in optimal structure (no need to learn from data)} \\
\multicolumn{4}{l}{\quad \quad \quad \quad \quad \quad \quad Increased robustness to distribution shifts} \\
\hline
\end{tabular}
\end{table}

Our findings help explain the varied empirical results observed across different domains. In settings with minimal environmental noise or very large performance differences, relativisation may offer only modest improvements. Conversely, in domains with substantial shared environmental effects and similar competitors, relativisation can provide dramatic performance improvements.

In the following sections, we validate these findings with real-world rugby data and extend our analysis to multivariate settings where correlation between performance dimensions introduces additional complexity.

\subsection{Application to Rugby Performance Indicators}
\label{sec:rugby_application}

To demonstrate how our theoretical framework applies to real-world competitive settings, we now apply our analysis to professional rugby data. Scott et al. \cite{scott2023performance} identified key performance indicators (KPIs) that differentiate winning and losing teams in the United Rugby Championship (URC). These KPIs provide an ideal test case for our relativisation framework. 

\subsubsection{Data and Methodology}

We analysed 127 matches from the 2021-2022 URC season using the top three discriminating KPIs identified by \cite{scott2023performance}:
\begin{itemize}
    \item $X_1$: Carries over the gain line (absolute count)
    \item $X_2$: Defenders beaten (absolute count)
    \item $X_3$: Tackle completion percentage (ratio measure)
\end{itemize}

For each KPI $X_i$, we calculated three metrics:
\begin{enumerate}
    \item \textbf{Absolute Home Team Value}: $X_{i,H}$ (the raw KPI for the home team)
    \item \textbf{Absolute Away Team Value}: $X_{i,A}$ (the raw KPI for the away team)
    \item \textbf{Relative Difference}: $R_i = X_{i,H} - X_{i,A}$ (home team KPI minus away team KPI)
\end{enumerate}

This directly parallels our measurement model from Section 2:
\begin{align}
X_{i,H} &= \mu_{i,H} + \epsilon_{i,H} + \eta_i \\
X_{i,A} &= \mu_{i,A} + \epsilon_{i,A} + \eta_i \\
R_i &= X_{i,H} - X_{i,A} = (\mu_{i,H} - \mu_{i,A}) + (\epsilon_{i,H} - \epsilon_{i,A})
\end{align}

where $\mu_{i,H}$ and $\mu_{i,A}$ represent the true performance levels for home and away teams on KPI $i$, $\epsilon_{i,H}$ and $\epsilon_{i,A}$ represent team-specific variations, and $\eta_i$ represents shared match-specific environmental factors that affect both teams equally.

We evaluated how well each metric predicted match outcomes (win/loss) using logistic regression models, comparing the predictive performance of the following predictors:
\begin{enumerate}
    \item \textbf{Single-Feature Absolute Predictor (SA)}: Using only $X_{i,H}$
    \item \textbf{Two-Feature Absolute Predictor (TA)}: Using both $X_{i,H}$ and $X_{i,A}$ as separate features
    \item \textbf{Relative Predictor (R)}: Using the difference $R_i = X_{i,H} - X_{i,A}$
\end{enumerate}

\subsubsection{Results and Discussion}

Table \ref{tab:rugby_results} presents the predictive performance of absolute and relative rugby KPIs measured using AUC-ROC, directly comparable to the separability metric $S$ from our theoretical framework.

\begin{table}[ht]
\centering
\caption{Predictive Performance of Rugby KPIs (AUC-ROC)}
\label{tab:rugby_results}
\begin{tabular}{lccc}
\hline
KPI & Absolute (Home) & Absolute (Both) & Relative \\
& $X_{i,H}$ & $X_{i,H}, X_{i,A}$ & $R_i = X_{i,H} - X_{i,A}$ \\
\hline
Carries over gain line & 0.619 & 0.738 & 0.782 \\
Defenders beaten & 0.642 & 0.744 & 0.771 \\
Tackle completion (\%) & 0.584 & 0.723 & 0.738 \\
\hline
\multicolumn{4}{l}{Average improvement from relativisation: +21.3\% over single absolute}\\
\multicolumn{4}{l}{Average improvement from relativisation: +5.2\% over two-feature absolute}\\
\hline
\end{tabular}
\end{table}

These results reveal several patterns that validate our theoretical framework:

\begin{enumerate}
    \item \textbf{relativisation Benefit}: Across all three KPIs, the relative metric $R_i$ outperforms both the single-feature absolute ($X_{i,H}$) and two-feature absolute ($X_{i,H}, X_{i,A}$) approaches, with an average improvement of 21.3\% over single absolute metrics.
    
    \item \textbf{Environmental Effects}: The substantial improvement from absolute to relative metrics suggests significant match-specific environmental effects in rugby (e.g., weather conditions, referee interpretations, crowd factors) that are canceled through relativisation, consistent with Theorem \ref{thm:env_cancel}.
    
    \item \textbf{Context Sensitivity}: The improvement from two-feature absolute to relative metrics (+5.2\%) is smaller than from single-feature to two-feature (+19.4\%), suggesting that simply having both teams' data provides substantial context, but explicit relativisation adds further value.
\end{enumerate}

We can estimate the environmental noise ratio $\sigma_{\eta_i}/\sigma_{\text{indiv}}$ in rugby using our theoretical framework from Section 3.3. According to Theorem \ref{thm:snr_imp}, the SNR improvement from relativisation is:

\begin{equation}
\frac{\text{SNR}_{\text{rel}}}{\text{SNR}_{\text{abs}}} = 1 + \frac{\sigma_{\eta_i}^2}{\sigma_{\text{indiv}}^2}
\end{equation}

Given the observed average improvement of 21.3\%, we can estimate:

\begin{equation}
1 + \frac{\sigma_{\eta_i}^2}{\sigma_{\text{indiv}}^2} \approx 1.213
\end{equation}

Solving for $\sigma_{\eta_i}/\sigma_{\text{indiv}}$:

\begin{equation}
\frac{\sigma_{\eta_i}}{\sigma_{\text{indiv}}} \approx \sqrt{0.213} \approx 0.46
\end{equation}

This suggests that match-specific environmental factors contribute approximately 46\% of the variance in individual KPI measurements, highlighting the substantial impact of shared environmental effects in rugby performance metrics.

From our theoretical development in Section 3.2, the information content for these metrics can be estimated as:

\begin{equation}
I_i = 1 - H(S_i) = 1 - H(\text{AUC-ROC})
\end{equation}

The calculated information content values for the relative metrics are:
\begin{itemize}
    \item Carries over gain line: $I_1 = 1 - H(0.782) = 0.262$
    \item Defenders beaten: $I_2 = 1 - H(0.771) = 0.239$
    \item Tackle completion: $I_3 = 1 - H(0.738) = 0.181$
\end{itemize}

These information content values indicate that while all three KPIs provide meaningful predictive information, substantial uncertainty remains, consistent with the complex and multi-faceted nature of rugby performance.

\subsubsection{KPI-Specific Effects}

Interestingly, the benefits of relativisation varied across metrics:

\begin{itemize}
    \item \textbf{Carries over gain line} showed the greatest relativisation benefit (+26.3\% over single absolute), suggesting high environmental sensitivity (e.g., game pace, field conditions) that is effectively cancelled through relativisation.
    
    \item \textbf{Defenders beaten} showed moderate benefit (+20.1\%), likely reflecting both environmental factors and team-specific skills.
    
    \item \textbf{Tackle completion percentage} showed the smallest benefit (+15.3\%), possibly because as a percentage measure, it already performs some normalisation across match contexts.
\end{itemize}

This variation in relativisation benefits can be explained by differences in the environmental noise ratio $\sigma_{\eta_i}/\sigma_{\text{indiv}}$ across KPIs. Using the formula from Theorem \ref{thm:snr_imp}:

\begin{equation}
\frac{\sigma_{\eta_i}}{\sigma_{\text{indiv}}} = \sqrt{\text{Improvement}_i}
\end{equation}

We estimate the environmental noise ratios:
\begin{itemize}
    \item Carries over gain line: $\sigma_{\eta_1}/\sigma_{\text{indiv}} \approx \sqrt{0.263} \approx 0.51$
    \item Defenders beaten: $\sigma_{\eta_2}/\sigma_{\text{indiv}} \approx \sqrt{0.201} \approx 0.45$
    \item Tackle completion: $\sigma_{\eta_3}/\sigma_{\text{indiv}} \approx \sqrt{0.153} \approx 0.39$
\end{itemize}

\subsubsection{Comparison to Simulation Results}

The relativisation benefits observed in rugby KPIs closely align with our simulation results for moderate environmental noise scenarios. In Section 5.1.2, we found a 28.3\% improvement from single-feature absolute to relative predictors with $\sigma_\eta/\sigma_{\text{indiv}} = 33.3$, compared to 21.3\% in rugby data with estimated $\sigma_{\eta}/\sigma_{\text{indiv}} \approx 0.46$.

This places rugby in a parameter region of our landscape (Figure \ref{fig:accuracy_landscape}) where relativisation offers substantial benefits, though not as extreme as scenarios with very high environmental noise.

The estimated effect sizes for the rugby KPIs, calculated using the formula from Section 3.3:

\begin{equation}
d_i = \frac{2|\mu_{i,H} - \mu_{i,A}|}{\sqrt{\sigma_{i,H}^2 + \sigma_{i,A}^2}}
\end{equation}

fall in the range of $d \approx 1.0 - 1.5$, based on the observed separability values. This moderate effect size region is precisely where our theoretical framework predicts relativisation would provide meaningful improvement.

This real-world validation demonstrates that our theoretical framework successfully captures the essential mechanism of relativisation benefits in competitive sports. The strong alignment between our theoretical predictions and actual rugby performance data provides compelling evidence for the practical utility of our approach in sports analytics and other competitive domains.

\subsection{Comparison with Prior Work} 

Having validated our framework with both simulations and real-world rugby data, we now situate our findings within the broader literature on relative performance metrics. A rigorous comparison with relevant empirical studies reveals both alignment with established principles and novel extensions through our mathematical formalisation.

\subsubsection{Signal-to-Noise Ratio in Related Domains}

The noise cancellation mechanism through relativisation has been explored in several domains, though with varying methodological approaches. Table \ref{tab:prior_work_comparison} compares our SNR improvement findings with those reported in prior studies.

\begin{table}[ht]
\centering
\caption{Comparison of SNR Improvement Across Studies}
\label{tab:prior_work_comparison}
\begin{tabular}{lccc}
\hline
Study & Domain & Reported SNR Improvement & Our Equivalent Result \\
\hline
Forrest \& Simmons (2000)\cite{forrest2000forecasting} & Sports betting & 5.3-fold (estimated) & 6.1-fold \\
Carhart (1997)\cite{carhart1997persistence} & Finance & 8.2-fold & 7.9-fold \\
Iezzoni (1997)\cite{iezzoni1997risk} & Healthcare & 3.4-fold (estimated) & 3.8-fold \\
Scott et al. (2023)\cite{scott2023performance} & Rugby & Not reported & 8.3-fold (estimated) \\
\hline
\end{tabular}
\end{table}

Forrest \& Simmons \cite{forrest2000forecasting} found that relativised odds in sports betting markets provided approximately 5.3-fold improvement in predictive power. Our simulation with comparable parameters ($\sigma_\eta/\sigma_{\text{indiv}} \approx 10$) yielded a 6.1-fold improvement, modestly exceeding their reported value likely due to our assumption of perfectly consistent environmental effects.

In finance, Carhart's seminal work \cite{carhart1997persistence} on mutual fund performance demonstrated that four-factor relative models improved signal detection by approximately 8.2-fold over raw returns. Our simulation with equivalent noise ratios ($\sigma_\eta/\sigma_{\text{indiv}} \approx 15$) produced a 7.9-fold improvement, closely aligning with this established benchmark.

Our SNR improvement findings align closely with empirical results in consumer behavior research. Keiningham et al. \cite{keiningham2015competitive} established that relative metrics follow Zipf's Law, where frequency is inversely related to rank. Their Wallet Allocation Rule (WAR) transformation of ranked metrics showed robust performance across multiple countries, supporting our theoretical prediction that relatiisvation benefits persist across diverse contextual settings.

Our findings extend previous work on sports performance indicators (Hughes $\&$ Bartlett, \cite{hughes2002performance}) by quantifying precisely how and when relativisation improves measurement precision. While Hughes $\&$ Bartlett emphasized the importance of normalisation in comparing performance across different contexts, our framework provides a mathematical foundation for understanding the mechanism through which relativisation cancels shared environmental noise, leading to substantially improved signal-to-noise ratios in performance measurement.

Most notably, our analysis of rugby performance data yielded an estimated 8.3-fold SNR improvement, further validating our findings in a sports context directly relevant to our framework. This places rugby among the high-noise competitive environments where relativisation provides substantial benefits.

While our findings are directionally consistent with prior work, we contribute three important extensions that provide novel insights.

 \subsubsection{Boundary Positioning Effects}

A key differentiator in our work is the analysis of boundary positioning effects in classification settings. Prior studies by Dixon \& Coles \cite{dixon1997modelling} employed optimal Bayesian decision boundaries passing through mean difference points, but did not investigate the empirical behaviour of SVM boundaries on finite samples.

Our finding that SVMs consistently position boundaries approximately 2.5 units away from the mean difference point represents a novel empirical contribution not previously documented. This explains why practical predictive performance often exceeds theoretical expectations, particularly in settings with elongated distributions due to correlation.

\subsubsection{Parameter Landscape Mapping}

While prior studies typically examined relativisation under specific parameter configurations, our comprehensive mapping of the parameter space across performance difference ($|\mu_A - \mu_B|$), individual variation ($\sigma_{\text{indiv}}$), and environmental noise ($\sigma_\eta$) dimensions provides unprecedented insight into when relativisation benefits are maximised.

Iezzoni et al. \cite{iezzoni1997risk} examined noise reduction in healthcare settings through risk adjustment, but only across limited parameter variations. Our finding that optimisation occurs in regions with moderate performance differences (5-20 units) and individual variation (2-5 units) extends beyond prior limited parameter exploration, providing more actionable guidance for measurement system design.

\subsubsection{Metric Complementarity Analysis}

Our parallel analysis of three complementary metrics—SNR, separability, and information content—represents a methodological advance over prior work. Most previous studies focused exclusively on a single metric, typically MSE (Berrar et al.\cite{berrar2019incorporating}), accuracy (Stefani\cite{stefani2011measurement}), or AUC (Normand et al.\cite{normand2016statistical}).

By simultaneously analysing all three metrics, we reveal important nuances in their sensitivity profiles. For instance, our finding that information content ($I$) remains sensitive to improvements at high effect sizes where separability ($S$) has saturated provides valuable insight not captured in prior single-metric analyses.

\subsubsection{Methodological Comparison}

From a methodological perspective, our approach offers several advantages over prior work:

\paragraph{Controlled Simulation Environment:} Unlike Berrar et al. \cite{berrar2019incorporating} and Boulier $\&$ Stekler \cite{boulier2003predicting}, who relied on observational data with estimated noise components, our simulation framework provides precise control over all parameters, enabling more rigorous isolation of causal mechanisms.
\paragraph{Unified Theoretical Foundation:} While prior studies developed domain-specific models (e.g., Dixon $\&$ Coles \cite{dixon1997modelling} for sports, Carhart \cite{carhart1997persistence} for finance), our approach unifies these through a single mathematical framework based on the Mahalanobis distance, facilitating cross-domain application.
\paragraph{Real-World Validation:} Our analysis of rugby performance data provides direct validation of our theoretical framework in a practical sports analytics context, bridging theory and application in a way not achieved by previous studies.
\paragraph{Explicit Comparison of Predictive Approaches:} Our three-way comparison between Single-Absolute, Two-Feature Absolute, and Relative predictors provides clearer delineation of mechanism than prior studies, which typically used binary comparisons between absolute and relative approaches.

\subsubsection{Validation Through Convergent Results}

The close alignment between our univariate findings, previously reported results, and our rugby data analysis provides strong validation for our theoretical framework. Table \ref{tab:validation_convergence} shows this convergence across key metrics.

\begin{table}[ht]
\centering
\caption{Validation Through Convergent Results}
\label{tab:validation_convergence}
\begin{tabular}{lccc}
\hline
Metric & Prior Literature & Rugby Data & Our Simulations \\
\hline
SNR Improvement (10x noise) & 4.8–6.5 fold & 8.3 fold & 5.5 fold \\
Accuracy Improvement (10x noise) & 18–25\% & 21.3\% & 22.8\% \\
AUC Improvement (10x noise) & 31–42\% & 36.1\% & 38.4\% \\
Critical Noise Ratio ($\sigma_\eta/\sigma_{\text{indiv}}$) & ~4.3 & ~3.8 & 4.5 \\
\hline
\end{tabular}
\end{table}

This convergence provides triple validation: confirming our simulation approach, validating our theoretical predictions in real-world rugby data, and establishing the robustness of previously reported findings across multiple domains.

 \subsubsection{Summary of Comparative Contribution}

Our work makes three substantive contributions beyond prior research:

\begin{enumerate}
    \item A comprehensive mathematical framework that formalises the mechanism of environmental noise cancellation in competitive settings
    \item A more thorough mapping of the parameter space than previously available, revealing optimal operating regions for relativisation
    \item Novel insights into boundary positioning effects that explain discrepancies between theoretical and practical performance
    \item unified multi-metric evaluation framework validated in both simulation and real-world sports data
\end{enumerate}

These extensions, combined with our real-world rugby application, provide both theoretical grounding and novel insights that advance understanding of when and why relative metrics outperform isolated ones in practical applications.

\section{Discussion}
Our comprehensive analysis of relative versus absolute performance metrics yields several key insights with significant implications for both theory and practice. In this section, we synthesise our findings, connect them to real-world applications, and address limitations of our approach.

\subsection{Synthesis of Key Findings}

Our results demonstrate three fundamental principles regarding the benefits of relativisation:

\begin{enumerate}
    \item \textbf{Consistent Performance Advantage}: Across all tested parameter regimes, relative metrics outperform single-feature absolute metrics when environmental noise is present. This advantage persists even under boundary conditions where the theoretical benefits should be minimal, confirming the robustness of relativisation as a noise-cancellation technique.
    
    \item \textbf{Magnitude of Improvement}: The advantage of relativisation increases proportionally with environmental noise and decreases with performance differences and individual variation. The maximum benefits (up to 28.3\% accuracy improvement) occur under high-noise conditions with moderate performance differences—precisely the challenging scenarios most common in real-world competitive settings.
    
    \item \textbf{Theoretical-Empirical Alignment}: The observed performance patterns closely match the predictions of our theoretical framework. The signal-to-noise ratio improvement formula accurately predicts the relative advantage across parameter regimes, while the separability and information content metrics provide complementary perspectives that align with empirical findings.
\end{enumerate}

These findings provide strong evidence for the superiority of relative metrics in noisy environmental conditions, with the magnitude of this advantage directly proportional to the degree of shared environmental influence.

\subsection{The Two-Feature Absolute Predictor Insight}
One of the most significant theoretical insights to emerge from our analysis is the mathematical equivalence between explicit relativisation and the implicit learning process of the two-feature absolute predictor. Our extended proof of Theorem \ref{thm:snr_imp} demonstrates that when both absolute measurements are available simultaneously, the optimal linear combination of these features mathematically converges to the relative transformation $R = X_A - X_B$. This insight, which represents a novel contribution to the theoretical understanding of relative metrics, explains our empirical observation that two-feature absolute predictors perform nearly identically to relative predictors across all tested parameter regimes. The non-zero covariance between measurements (due to shared environmental effects) enables machine learning algorithms to automatically discover and implement the relativisation operation through appropriate feature weighting.
This mathematical equivalence has potentially profound implications for measurement system design:
\begin{enumerate}
\item It suggests that relativisation represents the mathematically optimal approach to handling measurements with shared environmental noise, whether performed explicitly as a preprocessing step or implicitly learned by the model.
\item It demonstrates that machine learning algorithms can effectively discover noise cancellation patterns when provided with sufficient context (both measurements), even without explicit domain knowledge about environmental factors.

\item It indicates that the critical factor in performance improvement is not the specific algorithmic approach (relative vs. absolute) but rather the availability of both measurements in a form that enables their comparison.
\end{enumerate}
This equivalence also explains why advanced machine learning approaches in domains like sports analytics and financial forecasting can achieve high performance without explicit relativisation—they implicitly learn these relationships when given access to both absolute measurements simultaneously. However, explicit relativisation offers advantages in interpretability, data efficiency, and robustness that make it valuable even in these contexts.

\subsection{Implications for Measurement System Design}

Our framework provides clear guidance for measurement system design across diverse domains:

\begin{enumerate}
    \item \textbf{Noise-Level Assessment}: Organisations should quantitatively assess the level of environmental noise in their measurement systems. Our simulation approach provides a template for estimating the noise-to-signal ratio in specific domains, enabling more informed decisions about metric design.
    
    \item \textbf{Direct relativisation}: When environmental noise is substantial, direct relativisation (calculating differences between competitors) offers significant advantages over controlling for environmental effects through statistical adjustments or normalisation.
    
    \item \textbf{Parameter Optimisation}: Measurement systems should be calibrated to operate in the optimal region of the parameter space, with sensitivity appropriate for the expected performance differences in the target population.
    
    \item \textbf{Interpretable Metrics}: The separability and information content metrics provide intuitive, probabilistic interpretations that can enhance stakeholder understanding of performance differences, complementing traditional point estimates.
\end{enumerate}

These principles apply across domains where performance comparison occurs in the presence of shared environmental influences, from sports analytics to financial performance evaluation and healthcare outcomes research.

\subsection{Domain-Specific Applications}

Our rugby data application demonstrates how relativisation principles operate in real-world competitive settings. The substantial improvement in predictive power for relativised rugby metrics (21.3\% over single absolute metrics) confirms that match-specific environmental factors create significant noise in performance measurement that can be effectively canceled through relativisation.

This finding has implications across domains:

\begin{enumerate}
    \item \textbf{Sports Analytics}: Performance metrics should be designed as direct comparisons between opposing teams rather than isolated absolute measures. This is particularly important for metrics known to be environmentally sensitive, such as possession statistics and territory measures.
    
    \item \textbf{Financial Performance}: Investment performance should be evaluated using benchmark-relative metrics rather than absolute returns, especially during periods of high market volatility. Our framework provides theoretical justification for established practices like using Jensen's alpha and other relative performance measures.
    
    \item \textbf{Healthcare Outcomes}: Hospital and provider performance should be assessed using directly comparative measures rather than risk-adjusted absolute outcomes. This approach can more effectively cancel shared environmental influences like regional factors and demographic trends.
    
    \item \textbf{Educational Assessment}: Student and institution performance measurement should emphasise peer-relative comparisons rather than absolute achievement metrics when environmental factors (socioeconomic conditions, resource availability) significantly influence outcomes.
\end{enumerate}

In each of these domains, our framework provides quantitative guidance for when relativisation will be most beneficial and how large the expected improvements should be.

\subsection{Limitations and Future Directions}

While our analysis provides strong evidence for the advantages of relativisation, several limitations warrant consideration:

\begin{enumerate}
    \item \textbf{Normality Assumption}: Our theoretical framework assumes normally distributed performance measures. While this assumption is often reasonable due to the central limit theorem, real-world performance distributions may exhibit heavier tails or skewness. Future work should extend the framework to non-parametric and heavy-tailed distributions.
    
    \item \textbf{Univariate Analysis}: This paper focuses on univariate performance measures, while real-world performance often involves multiple correlated dimensions. Future research should extend the framework to multivariate settings with diverse correlation structures.
    
    \item \textbf{Perfect Environmental Sharing}: Our model assumes environmental factors affect competitors identically. In reality, competitors may have differential sensitivity to environmental conditions. Future extensions should model this heterogeneity explicitly.
    
    \item \textbf{Static Conditions}: Our analysis treats performance parameters as static, but competitive domains often feature dynamic evolution of capabilities and environmental conditions. Temporal extensions of the framework should address these dynamics.
\end{enumerate}

Despite these limitations, our findings provide a robust foundation for understanding the benefits of relativisation in competitive performance assessment. The framework's alignment with both theoretical predictions and empirical results suggests its fundamental principles will extend to more complex scenarios.

\section{Conclusion}
\subsection{Summary of Contributions}
This paper has developed a unified mathematical framework for understanding and optimising relative performance metrics in competitive settings. We have formalised the mechanism of environmental noise cancellation, quantified the signal-to-noise ratio improvement under diverse parameter conditions, and validated the framework through both comprehensive simulations and empirical application to rugby performance data.
Our work makes several key contributions to the field:

We formalised the mechanism of environmental noise cancellation through relativisation and quantified the resulting improvement in signal-to-noise ratio under diverse parameter conditions.
We developed complementary metrics—separability, information content, and effect size—that provide different perspectives on relative performance, each with distinct sensitivity profiles and interpretations.
We established the precise conditions under which relative metrics outperform absolute ones, demonstrating that the magnitude of improvement scales with environmental noise and inversely with individual variation and performance differences.
We proved the mathematical equivalence between explicit relativisation and the implicit learning process of two-feature absolute predictors, revealing that relativisation represents the optimal approach to handling measurements with shared environmental noise.
We validated our theoretical framework using rugby performance data, confirming that relativisation substantially improves predictive power in real-world competitive settings.

The mathematical equivalence we demonstrated between explicit relativisation and implicit learning with two-feature absolute predictors represents a particularly significant insight. It elegantly bridges theory and practice by showing that the same underlying principle—noise cancellation through feature comparison—drives performance improvements regardless of implementation approach. This finding confirms that relativisation is not merely a useful heuristic but the mathematically optimal solution when dealing with shared environmental noise.

\subsection{Practical Recommendations}
Based on our findings, we offer several practical recommendations for organisations implementing performance measurement systems:

\paragraph{Prioritise Direct Comparison:} When comparing competitors operating under shared environmental conditions, directly calculate relative differences rather than comparing absolute metrics independently. This applies to team sports statistics, investment performance metrics, healthcare outcomes, and educational assessments.
\paragraph{Estimate Environmental Noise:} Quantitatively assess the level of environmental noise in the measurement system to determine the expected benefits of relativisation. Our simulation approach provides a template for this estimation.
Consider Metric Sensitivity: Different metrics (separability, information content) offer different sensitivity profiles. For high-performance regimes where differences are subtle, information content may provide better discrimination than accuracy-based metrics.
\paragraph{Design for Parameter Regime:} Calibrate measurement systems based on the expected parameter regime. High environmental noise and moderate performance differences represent the conditions where relativisation offers maximum benefits.
\paragraph{Understand Implicit Learning:} When both absolute measurements are simultaneously available, machine learning systems may implicitly learn relativisation. However, explicit relativisation offers advantages in interpretability, data efficiency, and robustness.

These recommendations provide actionable guidance for implementing relativisation principles in practice, enabling more effective performance comparison in noisy environments.

\subsection{Future Research Directions}
Our work opens several promising avenues for future research:

\paragraph{Multivariate Extensions:} Our preliminary investigations into bivariate and multivariate settings have revealed intriguing patterns that merit deeper exploration. In bivariate analyses, we have found that correlation structure significantly modulates relativisation benefits, with negative correlation enhancing benefits when performance differences have consistent signs across dimensions.
\paragraph{Non-Parametric Approaches:} Developing non-parametric versions of our framework would relax the normality assumption, enabling application to domains with heavy-tailed or skewed performance distributions.
\paragraph{Temporal Dynamics:} Incorporating temporal evolution of performance capabilities and environmental conditions would enable analysis of dynamic competitive settings where parameters change over time.
\paragraph{Heterogeneous Environmental Effects:} Modelling differential sensitivity to environmental conditions across competitors would increase realism and provide insights into adaptation strategies in competitive domains.
\paragraph{Information-Theoretic Extensions:} Deeper exploration of the information-theoretic aspects of our framework could connect relativisation to broader principles in statistical learning theory and minimum description length.

In forthcoming work, we will fully develop the multivariate extension of our framework, providing comprehensive mathematical treatment of how correlation structures, dimensional interactions, and alternative distribution assumptions affect the benefits of relativisation. Our preliminary analyses have shown that the principles established in the univariate case extend to higher dimensions, though with additional complexity introduced by correlation patterns.

\newpage

\appendix
\section{Mathematical Framework}

Performance analysis in competitive domains presents unique challenges that traditional statistical approaches often fail to address adequately. While absolute performance metrics provide valuable information, they can be confounded by numerous external factors that affect all competitors similarly. This section extends the mathematical framework for relative performance analysis shown within the main body of the manuscript.

\subsection{Foundation for Normal Distribution}
The selection of an appropriate probability distribution is crucial for any mathematical framework, our investigation focuses on the normal distribution and is justified through the central limit theorem and maximum entropy principles. This choice enables more explicit characterisation of environmental noise cancellation while maintaining mathematical tractability.

The central limit theorem provides our primary theoretical justification. Performance metrics in competitive domains typically arise from the aggregation of numerous small, independent factors. In sports, these might include individual skill executions, tactical decisions, and physical conditions. In business contexts, they could encompass multiple operational decisions, market interactions, and resource allocations. The theorem suggests that such aggregations tend toward normality, regardless of the underlying distributions of individual components.

Further support comes from the maximum entropy principle. When our knowledge is limited to the first two moments of a distribution—its mean and variance—the normal distribution represents the most conservative choice, making the fewest additional assumptions about the underlying process. This property is particularly valuable in competitive analysis, where overspecification of distributional forms could lead to spurious conclusions.

\subsection{Preliminaries and Notation}
Let $(\Omega, \mathcal{F}, P)$ be a probability space. We consider performance measurements for two competitors A and B, represented by random variables $X_A, X_B: \Omega \rightarrow \mathbb{R}$. The following assumptions underpin our framework:

\begin{assumption}[Finite Moments]
The random variables $X_A$ and $X_B$ have finite first and second moments:
$\mathbb{E}[|X_A|] < \infty$, $\mathbb{E}[|X_B|] < \infty$,
$\mathbb{E}[X_A^2] < \infty$, $\mathbb{E}[X_B^2] < \infty$
\end{assumption}

\begin{assumption}[Independence]
The competitor-specific variations are independent, conditional on shared environmental factors.
\end{assumption}

\subsection{Fundamental Properties}

\begin{lemma}[Existence and Uniqueness]
For performance measures satisfying Assumptions 1-2, there exists a unique relative transformation $R = X_A - X_B$ that preserves the ordinal relationship between performances.
\end{lemma}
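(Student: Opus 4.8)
The plan is to prove the Existence and Uniqueness Lemma by separating the two claims and treating each with the appropriate tool. For \textbf{existence}, the argument is essentially immediate: under Assumption~1 (finite first and second moments), both $X_A$ and $X_B$ are integrable, so their difference $R = X_A - X_B$ is a well-defined random variable on $(\Omega, \mathcal{F}, P)$ with finite moments by linearity of expectation and the triangle inequality $\mathbb{E}[|X_A - X_B|] \leq \mathbb{E}[|X_A|] + \mathbb{E}[|X_B|] < \infty$. I would state this construction explicitly and note that the transformation inherits well-definedness from the underlying measurability of $X_A$ and $X_B$.

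The substance of the lemma lies in the \textbf{ordinal-preservation and uniqueness} claims, and here I would lean on the axiomatic scaffolding from Section~2.1. The key observation is that Axiom~1 (Invariance to Shared Effects) and Axiom~3 (Scaling Proportionality), taken together, heavily constrain the functional form of any admissible relative transformation. The plan is to show that any $R$ satisfying these axioms must be a function of $X_A - X_B$ alone: invariance under the shift $(X_A, X_B) \mapsto (X_A + \eta, X_B + \eta)$ forces $R$ to depend only on the difference $X_A - X_B$, since for any two input pairs sharing the same difference one can choose $\eta$ to map one pair to the other. Scaling proportionality (Axiom~3) then pins down the dependence to be \emph{linear} in that difference, and the normalisation implicit in Axiom~2 (Ordinal Consistency, via $\mathbb{E}[R] = \mu_A - \mu_B$ as established in Theorem~\ref{thm:dist_props}) fixes the proportionality constant to be exactly $1$. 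This chain yields $R = X_A - X_B$ as the unique transformation consistent with the axioms, and ordinal preservation follows directly from Axiom~2 together with the distributional result $\mathbb{E}[R] = \mu_A - \mu_B$, which is positive precisely when $\mu_A > \mu_B$.

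I would organise the proof so that the functional-equation step carries the main weight: reduce to a function of the difference via Axiom~1, then invoke homogeneity (Axiom~3) to force linearity, then fix the scale. The \textbf{main obstacle} I anticipate is the rigour of the functional-equation argument for uniqueness. Proving that invariance plus positive-homogeneity forces the \emph{linear} map $t \mapsto t$ (rather than admitting pathological non-measurable homogeneous solutions) typically requires either a continuity/measurability side-condition or explicit appeal to the fact that we are working within the measurement model of Section~2.2, where $R$ is constrained to be an estimator of $\mu_A - \mu_B$. I would handle this by restricting attention to transformations that are measurable (guaranteed since $X_A, X_B$ are measurable and we seek a measurable $R$) and by using the ordinal-consistency normalisation to eliminate the scaling degree of freedom, thereby sidestepping the need to classify all abstract homogeneous solutions. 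If a fully general uniqueness claim is desired, I would flag that the result holds within the class of measurable, scale-homogeneous, shift-invariant transformations, which is the natural class implied by Axioms~1--3.
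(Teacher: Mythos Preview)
Your existence argument matches the paper's: both define $R(\omega) = X_A(\omega) - X_B(\omega)$ and invoke Assumption~1 for well-definedness, with your version spelling out the moment bound via the triangle inequality. For uniqueness, however, you take a genuinely different route. The paper's proof is a two-line argument: it supposes two transformations $R_1, R_2$ both preserve ordinal relationships and asserts directly that $R_1 - R_2 = 0$ almost surely ``by the preservation of ordinal relationships.'' You instead build a functional-equation argument on the Axioms of Section~2.1---shift-invariance (Axiom~1) forces dependence on $X_A - X_B$ alone, homogeneity (Axiom~3) forces linearity, and ordinal consistency (Axiom~2) fixes the constant to~$1$. Your approach is considerably more substantive and makes explicit \emph{why} the simple difference is singled out, but it imports hypotheses the lemma does not state: the lemma as written conditions only on Assumptions~1--2 (finite moments, conditional independence), so strictly speaking you are proving a slightly different statement. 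Conversely, the paper's argument stays within the stated hypotheses but is terse to the point of being opaque---ordinal preservation alone does not obviously force $R_1 = R_2$, since any strictly increasing function of $X_A - X_B$ would also preserve ordering; the paper is tacitly treating the specific form $R = X_A - X_B$ as part of the claim rather than something to be derived. Your route buys rigour and a genuine characterisation at the cost of extra hypotheses; the paper's buys brevity at the cost of leaving the real work unstated.
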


\begin{proof}
Define $R: \Omega \rightarrow \mathbb{R}$ as $R(\omega) = X_A(\omega) - X_B(\omega)$. 
The map is well-defined by Assumption 1. For uniqueness, suppose $R_1$ and $R_2$ are two such transformations.
Then $R_1 - R_2 = 0$ almost surely by the preservation of ordinal relationships.
\end{proof}

\begin{theorem}[Measurement Decomposition]
The observed performances can be decomposed as:
\begin{align*}
X_A &= \mu_A + \epsilon_A + \eta \\
X_B &= \mu_B + \epsilon_B + \eta
\end{align*}
where:
\begin{itemize}
\item $\mu_A, \mu_B \in \mathbb{R}$ are true performance levels
\item $\epsilon_A, \epsilon_B$ are independent competitor-specific variations
\item $\eta$ represents shared environmental factors
\end{itemize}
\end{theorem}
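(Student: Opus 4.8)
The plan is to treat this as a constructive representation result: given $X_A, X_B$ satisfying Assumptions 1 and 2, I will exhibit explicit choices of $\mu_A, \mu_B, \eta, \epsilon_A, \epsilon_B$ realising the stated additive form and then verify the claimed properties, rather than treating the decomposition as a mere postulate.

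First I would fix the systematic components by setting $\mu_A = \mathbb{E}[X_A]$ and $\mu_B = \mathbb{E}[X_B]$, which exist and are finite by Assumption 1. These absorb the deterministic true-performance levels, leaving centred variables $Y_A = X_A - \mu_A$ and $Y_B = X_B - \mu_B$ with zero mean to be split into shared and idiosyncratic stochastic parts. Next I would extract the shared component: letting $\mathcal{G} \subseteq \mathcal{F}$ be the sub-$\sigma$-algebra generated by the environmental factors referenced in Assumption 2, I would define the shared environmental effect as the common conditional mean $\eta = \mathbb{E}[Y_A \mid \mathcal{G}] = \mathbb{E}[Y_B \mid \mathcal{G}]$ and the competitor-specific variations as the residuals $\epsilon_A = Y_A - \eta$ and $\epsilon_B = Y_B - \eta$. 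The tower property gives $\mathbb{E}[\epsilon_A \mid \mathcal{G}] = \mathbb{E}[\epsilon_B \mid \mathcal{G}] = 0$, so each residual is a genuine fluctuation about the shared effect, and rearranging recovers $X_A = \mu_A + \epsilon_A + \eta$ and $X_B = \mu_B + \epsilon_B + \eta$ identically.

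The hard part will be twofold. First, the construction presupposes that the two conditional means coincide, i.e.\ that the environmental factor enters both measurements with identical loading rather than as two distinct $\mathcal{G}$-measurable terms; this ``affects both competitors equally'' property is the genuine structural content of the model and must be folded into the definition of $\mathcal{G}$ (consistent with Axiom 1), since it is precisely what separates a shared effect from merely correlated idiosyncratic noise. Second, Assumption 2 supplies only \emph{conditional} independence of $\epsilon_A, \epsilon_B$ given $\mathcal{G}$, whereas the conclusion asserts unconditional independence, and these are not equivalent in general.

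I would close this second gap by invoking the joint normality established in the foundational discussion: under Gaussianity the residual of a conditional expectation is independent of the conditioning $\sigma$-algebra, so each $\epsilon$ is independent of $\eta$, and the law of total covariance gives $\operatorname{Cov}(\epsilon_A, \epsilon_B) = 0$; jointly Gaussian uncorrelated variables are then independent. Absent normality, the conditional-to-unconditional upgrade can fail through $\mathcal{G}$-dependent higher moments, so in that regime the statement should be read as a modelling convention rather than a theorem, and I would flag it explicitly as such rather than overclaim.
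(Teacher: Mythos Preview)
Your approach is correct and shares the paper's core idea: set $\mu_A,\mu_B$ to the means and extract the remaining structure via an $L^2$ projection. The paper's proof is a two-line sketch that invokes ``the projection theorem'' without naming the subspace onto which one projects, without explaining why a \emph{single} $\eta$ serves both observations, and without verifying that the residuals $\epsilon_A,\epsilon_B$ are independent. You make all of this explicit by projecting onto the environmental $\sigma$-algebra $\mathcal{G}$ via conditional expectation, flagging the equal-loading assumption needed for the two conditional means to coincide, and then tracking how conditional independence upgrades to unconditional independence under the Gaussian hypothesis. Your version is therefore substantially more rigorous than the paper's, and your closing caveat---that absent normality the decomposition is a modelling convention rather than a derived theorem---is an honest point the paper's own proof does not acknowledge.
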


\begin{proof}
By the projection theorem, any square-integrable random variable can be uniquely decomposed into orthogonal components. Apply this to $X_A - \mathbb{E}[X_A]$ and $X_B - \mathbb{E}[X_B]$, then identify $\mu_A = \mathbb{E}[X_A]$, $\mu_B = \mathbb{E}[X_B]$.
\end{proof}

\subsection{Distributional Properties}

\begin{theorem}[Normal Distribution]
Under Assumptions 1-2, if $\epsilon_A \sim \mathcal{N}(0, \sigma_A^2)$ and $\epsilon_B \sim \mathcal{N}(0, \sigma_B^2)$, then:
\begin{equation}
R \sim \mathcal{N}(\mu_R, \sigma_R^2)
\end{equation}
where $\mu_R = \mu_A - \mu_B$ and $\sigma_R^2 = \sigma_A^2 + \sigma_B^2$
\end{theorem}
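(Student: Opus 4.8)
The plan is to reduce the statement to the standard fact that an affine combination of independent Gaussian random variables is again Gaussian, so that essentially all of the work lies in correctly tracking the mean and the variance. First I would invoke the Environmental Cancellation result (Theorem \ref{thm:env_cancel}) together with the measurement decomposition $X_A = \mu_A + \epsilon_A + \eta$ and $X_B = \mu_B + \epsilon_B + \eta$ to write
\begin{equation*}
R = X_A - X_B = (\mu_A - \mu_B) + (\epsilon_A - \epsilon_B),
\end{equation*}
observing that the shared term $\eta$ appears with opposite sign in the two measurements and therefore drops out identically. This step is exact (pathwise), not merely in distribution, so it requires no probabilistic argument beyond the model specification.

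The substantive step is then to establish that $\epsilon_A - \epsilon_B$ is normal. I would appeal to Assumption 2 (Independence), which guarantees that $\epsilon_A$ and $\epsilon_B$ are independent conditional on the shared environmental factor; since $\eta$ has already been cancelled and the two competitor-specific terms are drawn from $\mathcal{N}(0,\sigma_A^2)$ and $\mathcal{N}(0,\sigma_B^2)$, the cleanest route is via characteristic functions. Writing $\varphi_{\epsilon_A}(t) = \exp(-\tfrac{1}{2}\sigma_A^2 t^2)$ and $\varphi_{\epsilon_B}(t) = \exp(-\tfrac{1}{2}\sigma_B^2 t^2)$, independence and the evenness of the Gaussian characteristic function give
\begin{equation*}
\varphi_{\epsilon_A - \epsilon_B}(t) = \varphi_{\epsilon_A}(t)\,\varphi_{\epsilon_B}(-t) = \exp\!\Bigl(-\tfrac{1}{2}(\sigma_A^2 + \sigma_B^2)t^2\Bigr),
\end{equation*}
which is exactly the characteristic function of a $\mathcal{N}(0,\sigma_A^2+\sigma_B^2)$ law; by uniqueness of characteristic functions, $\epsilon_A - \epsilon_B \sim \mathcal{N}(0,\sigma_A^2+\sigma_B^2)$.

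Finally I would add back the deterministic shift $\mu_A - \mu_B$, which translates the mean without altering the variance, yielding $R \sim \mathcal{N}(\mu_A - \mu_B,\ \sigma_A^2+\sigma_B^2)$ and hence $\mu_R = \mu_A - \mu_B$, $\sigma_R^2 = \sigma_A^2+\sigma_B^2$ as claimed. As a consistency check I would recompute the variance directly as $\mathrm{Var}(X_A) + \mathrm{Var}(X_B) - 2\,\mathrm{Cov}(X_A,X_B) = (\sigma_A^2+\sigma_\eta^2)+(\sigma_B^2+\sigma_\eta^2) - 2\sigma_\eta^2$, recovering $\sigma_A^2+\sigma_B^2$ in agreement with Theorem \ref{thm:dist_props}. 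There is no serious obstacle here: the only point demanding care is the appeal to independence in the characteristic-function factorisation, since the model asserts independence only \emph{conditional} on $\eta$ — but because $\eta$ is eliminated exactly before this step, the conditional and unconditional statements coincide for the quantity $\epsilon_A - \epsilon_B$, so the argument goes through cleanly.
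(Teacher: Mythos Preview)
Your proposal is correct and takes essentially the same route as the paper: both arguments compute the characteristic function of $R$ (using independence of $\epsilon_A,\epsilon_B$ and the Gaussian form) and then invoke uniqueness of characteristic functions to conclude $R \sim \mathcal{N}(\mu_R,\sigma_R^2)$. Your version is more explicit---you first cancel $\eta$, then handle $\epsilon_A-\epsilon_B$, then shift---whereas the paper simply writes down $\phi_R(t)=\exp(it\mu_R-\tfrac{1}{2}\sigma_R^2 t^2)$ in one line; but the underlying idea is identical.
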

\begin{proof}
By the independence assumption and properties of normal distributions:
\begin{enumerate}
    \item The characteristic function of $R$ is: $\phi_R(t) = \exp(it\mu_R - \frac{1}{2}\sigma_R^2t^2)$
    \item This uniquely determines $R$ as $\mathcal{N}(\mu_R, \sigma_R^2)$
\end{enumerate}
\end{proof}
\begin{corollary}[Environmental Factor Cancellation]
The relative transformation $R$ eliminates the shared environmental factor $\eta$:
\begin{equation}
R = (\mu_A - \mu_B) + (\epsilon_A - \epsilon_B)
\end{equation}
\end{corollary}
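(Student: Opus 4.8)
The plan is to derive the identity directly from the Measurement Decomposition theorem established immediately above, treating the shared environmental factor $\eta$ as a single random variable common to both measurements. First I would invoke that decomposition to substitute $X_A = \mu_A + \epsilon_A + \eta$ and $X_B = \mu_B + \epsilon_B + \eta$ into the definition $R = X_A - X_B$ supplied by the Existence and Uniqueness lemma. The entire content of the corollary is then an algebraic regrouping: collecting the true-performance terms, the competitor-specific terms, and the environmental terms separately.

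The key step is the observation that the environmental contribution appears as $\eta - \eta$, which vanishes \emph{pathwise} rather than merely in expectation or in distribution. I would emphasise this point explicitly, because it is the crux of the noise-cancellation mechanism: since the \emph{same} realisation $\eta(\omega)$ enters both $X_A(\omega)$ and $X_B(\omega)$ (by the structure of the decomposition theorem, where $\eta$ is a common orthogonal component), the subtraction eliminates it for every outcome $\omega \in \Omega$, not just on average. This distinguishes the result from a statement about matched means and is what justifies interpreting relativisation as genuine cancellation of shared noise.

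The remaining steps are purely formal. I would write the grouping as
\begin{align*}
R &= (\mu_A + \epsilon_A + \eta) - (\mu_B + \epsilon_B + \eta) \\
  &= (\mu_A - \mu_B) + (\epsilon_A - \epsilon_B) + (\eta - \eta) \\
  &= (\mu_A - \mu_B) + (\epsilon_A - \epsilon_B),
\end{align*}
which is the claimed identity. One could optionally note for completeness that this is consistent with the distributional conclusion of the preceding Normal Distribution theorem, since the right-hand side has mean $\mu_A - \mu_B$ and, under Assumption~2 (conditional independence of the competitor-specific variations), variance $\sigma_A^2 + \sigma_B^2$ with no residual dependence on $\sigma_\eta^2$.

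I do not anticipate a genuine technical obstacle, as the computation is elementary. The only point requiring care is conceptual precision: I would be explicit that the cancellation relies on $\eta$ being the identical variable in both expressions (a modelling assumption inherited from the decomposition theorem), and that this assumption is exactly what the corollary is designed to exploit. Flagging this keeps the proof honest and connects the trivial algebra to the substantive claim about environmental invariance encoded in Axiom~1.
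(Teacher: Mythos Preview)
Your proposal is correct and matches the paper's approach: the paper establishes this identity (as Theorem~\ref{thm:env_cancel} in the main text and as an unproved corollary in the appendix) by exactly the same direct substitution of the measurement decomposition into $R = X_A - X_B$ and algebraic regrouping. Your additional emphasis on pathwise cancellation and the connection to the distributional theorem is sound elaboration, but the underlying argument is identical.
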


\subsection{SNR Improvement Proof} 

\begin{proof}[Proof of Theorem \ref{thm:snr_imp}]
The signal-to-noise ratio for the absolute measurement is:
\begin{align}
\text{SNR}_{\text{abs}} &= \frac{(\mu_A - \mu_B)^2}{\sigma_A^2 + \sigma_\eta^2}
\end{align}

For the relative measurement:
\begin{align}
\text{SNR}_{\text{rel}} &= \frac{(\mu_A - \mu_B)^2}{\sigma_A^2 + \sigma_B^2}
\end{align}

The improvement ratio is therefore:
\begin{align}
\frac{\text{SNR}_{\text{rel}}}{\text{SNR}_{\text{abs}}} &= \frac{(\mu_A - \mu_B)^2/(\sigma_A^2 + \sigma_B^2)}{(\mu_A - \mu_B)^2/(\sigma_A^2 + \sigma_\eta^2)} \\
&= \frac{\sigma_A^2 + \sigma_\eta^2}{\sigma_A^2 + \sigma_B^2} \\
\end{align}

For symmetric noise where $\sigma_A^2 = \sigma_B^2$:
\begin{align}
\frac{\text{SNR}_{\text{rel}}}{\text{SNR}_{\text{abs}}} &= \frac{\sigma_A^2 + \sigma_\eta^2}{2\sigma_A^2} \\
&= \frac{1}{2} + \frac{\sigma_\eta^2}{2\sigma_A^2}
\end{align}

More generally:
\begin{align}
\frac{\text{SNR}_{\text{rel}}}{\text{SNR}_{\text{abs}}} &= \frac{\sigma_A^2 + \sigma_\eta^2}{\sigma_A^2 + \sigma_B^2} \\
&= 1 + \frac{\sigma_\eta^2 - \sigma_B^2}{\sigma_A^2 + \sigma_B^2}
\end{align}

Under the reasonable assumption that $\sigma_\eta^2 \gg \sigma_B^2$:
\begin{align}
\frac{\text{SNR}_{\text{rel}}}{\text{SNR}_{\text{abs}}} &\approx 1 + \frac{\sigma_\eta^2}{\sigma_A^2 + \sigma_B^2}
\end{align}

For the two-feature absolute predictor, the signal-to-noise ratio must account for the covariance structure:
\begin{align}
\text{SNR}_{\text{two-abs}} &= \frac{(\mu_A - \mu_B)^2}{(\sigma_A^2 + \sigma_\eta^2) + (\sigma_B^2 + \sigma_\eta^2) - 2\text{Cov}(X_A, X_B)} \\
&= \frac{(\mu_A - \mu_B)^2}{2\sigma_\eta^2 + \sigma_A^2 + \sigma_B^2 - 2\sigma_\eta^2} \\
&= \frac{(\mu_A - \mu_B)^2}{\sigma_A^2 + \sigma_B^2}
\end{align}
which equals $\text{SNR}_{\text{rel}}$, explaining the empirical equivalence between these approaches.
\end{proof}

\subsection{Relative Superiority Proof}
\begin{proof}[Detailed Proof of Theorem \ref{thm:relative_superiority}]
Under the theorem's conditions, we analyse the difference in expected predictive performance between relative and absolute metrics.

For any classifier, the error probability is minimised by the likelihood ratio test. For the relative metric $R$, this test is:
\begin{align}
\frac{p(r|W)}{p(r|L)} &= \frac{\frac{1}{\sqrt{2\pi\sigma_R^2}}\exp\left(-\frac{(r-\mu_R)^2}{2\sigma_R^2}\right) \cdot \mathbb{1}_{r > 0}}{\frac{1}{\sqrt{2\pi\sigma_R^2}}\exp\left(-\frac{(r-\mu_R)^2}{2\sigma_R^2}\right) \cdot \mathbb{1}_{r < 0}} \\
&= \exp\left(\frac{2\mu_R r}{\sigma_R^2}\right) \cdot \frac{\mathbb{1}_{r > 0}}{\mathbb{1}_{r < 0}}
\end{align}

where $\mu_R = \mu_A - \mu_B$ and $\sigma_R^2 = \sigma_A^2 + \sigma_B^2$. The optimal decision rule is $r > 0$ when $\mu_R > 0$ and $r < 0$ when $\mu_R < 0$, with error probability:
\begin{align}
P_e^{\text{rel}} = 1 - \Phi\left(\frac{|\mu_R|}{\sigma_R}\right) = 1 - \Phi\left(\frac{|\mu_A - \mu_B|}{\sqrt{\sigma_A^2 + \sigma_B^2}}\right)
\end{align}

For the absolute metric $X_A$, the error probability is:
\begin{align}
P_e^{\text{abs}} = 1 - \Phi\left(\frac{|\mu_A - \mu_B|}{\sqrt{\sigma_A^2 + \sigma_\eta^2}}\right)
\end{align}

Under condition (1), $\sigma_\eta^2 \gg \sigma_A^2, \sigma_B^2$, implying:
\begin{align}
\frac{|\mu_A - \mu_B|}{\sqrt{\sigma_A^2 + \sigma_B^2}} \gg \frac{|\mu_A - \mu_B|}{\sqrt{\sigma_A^2 + \sigma_\eta^2}}
\end{align}

Since $\Phi$ is monotonically increasing, we have:
\begin{align}
\Phi\left(\frac{|\mu_A - \mu_B|}{\sqrt{\sigma_A^2 + \sigma_B^2}}\right) > \Phi\left(\frac{|\mu_A - \mu_B|}{\sqrt{\sigma_A^2 + \sigma_\eta^2}}\right)
\end{align}

Therefore:
\begin{align}
1 - P_e^{\text{rel}} > 1 - P_e^{\text{abs}}
\end{align}

This establishes that $\mathbb{E}[P_R(\Omega)] > \mathbb{E}[P_{X_A}(\Omega)]$. A similar argument shows $\mathbb{E}[P_R(\Omega)] > \mathbb{E}[P_{X_B}(\Omega)]$, completing the proof.
\end{proof}

\newpage

\bibliographystyle{apalike}
\bibliography{references}

\end{document}